\DeclarePairedDelimiter\ceil{\lceil}{\rceil}
\DeclarePairedDelimiter\floor{\lfloor}{\rfloor}
\def\dftbig#1{{\hbox{$\left#1\vbox to10\p@{}\right.\n@space$}}}\makeatother
\renewcommand{\Pr}{\mathsf{Pr}}
\newcommand{\E}{\mathsf{E}}
\newcommand{\eat}[1]{}
\newtheorem{thm}{Theorem}[section]
\newtheorem{lem}[thm]{Lemma}
\newtheorem{cor}[thm]{Corollary}
\theoremstyle{definition}
\begin{document}

\title{Aggregation and Transformation of \\ Vector-Valued Messages in the Shuffle \\ Model of Differential Privacy}

\author{Mary Scott\textsuperscript{\orcidlink{0000-0003-0799-5840}},
Graham Cormode\textsuperscript{\orcidlink{0000-0002-0698-0922}}, and
Carsten Maple\textsuperscript{\orcidlink{0000-0002-4715-212X}}}
\date{}

\maketitle

\begin{abstract}
Advances in communications, storage and computational technology allow significant quantities of data to be collected and processed by distributed devices.
Combining the information from these endpoints can realize significant societal benefit but presents challenges in protecting the privacy of individuals, especially important in an increasingly regulated world.
Differential privacy (DP) is a technique that provides a rigorous and provable privacy guarantee for aggregation and release.
The Shuffle Model for DP has been introduced to overcome challenges regarding the accuracy of local-DP algorithms and the privacy risks of central-DP.
In this work we introduce a new protocol for vector aggregation in the context of the Shuffle Model.
The aim of this paper is twofold; first, we provide a single message protocol for the summation of real vectors in the Shuffle Model, using advanced composition results.
Secondly, we provide an improvement on the bound on the error achieved through using this protocol through the implementation of a Discrete Fourier Transform, thereby minimizing the initial error at the expense of the loss in accuracy through the transformation itself.
This work will further the exploration of more sophisticated structures such as matrices and higher-dimensional tensors in this context, both of which are reliant on the functionality of the vector case.
\end{abstract}

\section{Introduction} \label{sec:intro}
The benefit of processing data from distributed sources is being realized in a range of applications in areas including medical diagnoses and treatment, transportation, agile manufacturing, utilities management and entertainment services.
The rapid adoption of Internet of Things (IoT) systems that leverage recent advances in information collection, processing, communication and analysis, has played a significant role in realizing these benefits.
However, much of the information collected in IoT systems can, directly or indirectly, reveal personal information of the parties involved.
Such privacy concerns are gaining importance and concern in an increasingly regulated space.
Differential Privacy (DP)~\cite{dworkintro} has emerged as the leading candidate to provide privacy protection in the mining and release of private data.
DP provides a strong, mathematical definition of privacy that guarantees a measurable level of confidentiality for any data subject in the dataset to which it is applied.
In this way, useful collective information can be learned about a population, whilst simultaneously protecting the personal information of each data subject.

In particular, DP guarantees that the impact on any particular individual as a result of analysis on a dataset is the same, whether or not the individual is included in the dataset.
This guarantee is quantified by a parameter $\varepsilon$, which reflects strong privacy in cases where it is small.
However, finding an algorithm that achieves DP often requires a trade-off between privacy and accuracy: a smaller $\varepsilon$ sacrifices accuracy for better privacy, and vice versa.
DP enables data analyses such as the statistical analysis of the salaries of a population.
This allows useful collective information to be studied, so long as $\varepsilon$ is adjusted appropriately to satisfy the definition of DP.

In this work we focus on protocols in the \emph{Single-Message Shuffle Model}~\cite{balleprivacyblanket}, a one-time data collection model where each of $n$ users is permitted to submit a single message.
However, this constraint of applying to single messages restricts the applicability of the model.
We address this by applying the Single-Message Shuffle Model to the problem of \emph{vector aggregation}.
This is a valuable contribution since there are an increasing number of use cases, including Federated Learning, that utilize vector aggregation.

There are many practical applications of the Single-Message Shuffle Model in this federated setting, where multiple users collaboratively solve a Machine Learning problem, the results of which simultaneously improves the model for the next round~\cite{mcmahan}.
The updates generated by the users after each round are high-dimensional vectors, so this data type will prove useful in applications such as training a Deep Neural Network to predict the next word that a user types~\cite{neural}.
It is feasible to implement our shuffle-based protocol within the framework of Secure Aggregation, which would remove the requirement for an explicit entity to perform the shuffle~\cite{bonawitz}.

Our first contribution is a new protocol in the Single-Message Shuffle Model for the private summation of vector-valued messages, extending an existing result from Balle \emph{et al.}~\cite{balleprivacyblanket} by permitting the $n$ users to each submit a vector of real numbers rather than being restricted to submitting a scalar.
The resulting estimator is unbiased and has normalized mean squared error (MSE) $O_{\varepsilon, \delta} (d^{8/3} n^{-5/3})$, where $d$ is the dimension of each vector.
Our second contribution, which we call the Fourier Summation Algorithm (FSA), combines the private summation protocol with the Discrete Fourier Transform (DFT) from Rastogi and Nath in the centralized case~\cite{rastogi}, to improve the accuracy of the tight bound to $O_{\varepsilon, \delta} (m^{8/3} n^{-5/3})$, where $m$ represents the number of Fourier coefficients retained.
Since $m \ll d$, this is a considerable improvement on the previous estimator, though some accuracy is lost through the transformation of the messages between the original and Fourier domains.

Compared to prior work on scalar aggregation (sum), our work requires several innovations.
It begins with the same generalization of randomized response to encode each real input value into a discrete histogram as has been used in several prior works.
However, we then have to argue how to combine the results from multiple vector coordinates to rebuild a representation of the aggregate input.
Naively, it might seem that we have to divide the `privacy budget' ($\epsilon$) into $d$ pieces to process a $d$-dimensional histogram.
However, our analysis shows that this can be improved so that we sample $t$ out of $d$ locations in the vector, where the privacy cost only scales proportional to $\sqrt{t}$; moreover, we show that analytically and empirically it is best to set $t$ as small as possible, i.e., to sample $t=1$ coordinates from each client.
We introduce the idea of combining the Fourier transformation with privacy in the shuffle model, and demonstrate that it is possible to improve the accuracy/communication trade-off, by sampling from a reduced selection of Fourier coefficients.
It is not meaningful to apply such a transformation in the scalar case, and so the approach is new to the vector setting.

It is possible for these vector summation protocols to be extended to produce a similar protocol for the linearization of matrices.
To do this, it must be recognized that matrix decomposition or reduction is required to ensure that the constituent vectors are linearly independent.
Given that we fix the dimension of each matrix, it is important to ensure that all constituent vectors are linearly independent, as this guarantees a unique solution for each matrix.
Our method for matrices can be further extended to higher-dimensional tensors, which are useful for the representation of multi-dimensional data in Neural Networks.

\section{Related Work} \label{sec:litreview}

The earliest attempts at protecting the privacy of users in a dataset focused on simple ways of suppressing or generalizing the data.
Examples include $k$-anonymity~\cite{kanonymity}, $l$-diversity~\cite{ldiversity} and $t$-closeness~\cite{tcloseness}.
However, such attempts have been shown to be insufficient, as proved by numerous examples~\cite{dwork}.

This harmful leakage of sensitive information can be prevented through the application of DP, since the method mathematically guarantees that the chance of a \emph{linkage attack} on an individual in the dataset is almost identical to that on an individual not in the dataset.

Since DP was first conceptualized in 2006 by Dwork \emph{et al.}~\cite{dworkintro}, the majority of research in the field has focused on two contrasting models.
In the Centralized Model, users submit their sensitive personal information directly to a \emph{trusted} central data collector, who adds \emph{random noise} to the raw data to provide DP, before assembling and analyzing the aggregated results.

In the Local Model, DP is guaranteed when each user applies a \emph{local randomizer} to add random noise to their data before it is submitted.
The Local Model differs from the Centralized Model in that the central entity does not see the users' raw data at any point, and therefore does not have to be trusted.
However, the level of noise required per user for the same privacy guarantee is much higher, limiting the efficacy of Local Differential Privacy (LDP) unless used in very large populations.
For this reason, the application of LDP is largely the domain of major companies such as Google~\cite{google}, Apple~\cite{apple} and Microsoft~\cite{microsoft}.

Neither of the two models can provide a good balance between the trust of the central entity and the level of noise required to guarantee DP.
Hence, in recent years researchers have tried to create intermediate models that reap the benefits of both.

In 2017, Bittau \emph{et al.}~\cite{bittau} introduced the Encode, Shuffle, Analyze (ESA) model, which provides a general framework for the addition of a \emph{shuffling step} in a private protocol.
After the data from each user is encoded, it is randomly permuted to unbind each user from their data before analysis takes place.
In 2019, Cheu \emph{et al.}~\cite{cheu} formalized the Shuffle Model as a special case of the ESA model; their model connects this additional shuffling step to the Local Model.
In the Shuffle Model, the local randomizer applies a randomized mechanism on a per-element basis, potentially replacing a truthful value with another randomly selected domain element.
The role of these independent reports is to create what is known as a \emph{privacy blanket}, which masks the outputs which are reported truthfully.

As well as the result on the private summation of scalar-valued messages in the Single-Message Shuffle Model that we will be using~\cite{balleprivacyblanket}, Balle \emph{et al.} have published two more recent works that solve related problems.
The first paper~\cite{balleimproved} improved the distributed $n$-party summation protocol from Ishai \emph{et al.}~\cite{ishai} in the context of the Single-Message Shuffle Model to require $O (1 + \pi/\log n)$ scalar-valued messages, instead of a logarithmic dependency of $O (\log n + \pi)$, to achieve statistical security $2^{-\pi}$.
The second paper~\cite{ballemulti} introduced two new protocols for the private summation of scalar-valued messages in the Multi-Message Shuffle Model, an extension of the Single-Message Shuffle Model that permits each of the $n$ users to submit more than one message, using several independent shufflers to securely compute the sum.
In this work, Balle \emph{et al.} contributed a recursive construction based on the protocol described in~\cite{balleprivacyblanket}, as well as an alternative mechanism which implements a discretized distributed noise addition technique using the result from Ishai \emph{et al.}~\cite{ishai}.

A relevant concurrent work to our first contribution is the work of Girgis \emph{et al.}~\cite{girgis}, which uses the Single-Message Shuffle Model directly in the Federated Learning framework.
This contrasts with the link of our contribution to Federated Learning as a use case of vector aggregation.
A recent paper by Feldman \emph{et al.}~\cite{clones} extends the `amplification by shuffling' problem: the remaining result of Balle \emph{et al.}~\cite{balleprivacyblanket} that is outside the scope of our work.

Also relevant to our research is the work of Ghazi \emph{et al.}~\cite{ghazipower}, which explored the related problems of private frequency estimation and selection in a similar context, drawing comparisons between the errors achieved in the Single-Message Shuffle Model and the Multi-Message Shuffle Model.
A similar team of authors produced a follow-up paper~\cite{ghazidistributed} describing a more efficient protocol for private summation in the Single-Message Shuffle Model, using the `invisibility cloak' technique to facilitate the addition of zero-sum noise without coordination between the users.
The most recent work of Ghazi \emph{et al.}~\cite{ghazimissing} relaxes the single-message requirement of their previous protocols to improve the accuracy of private summation in the Shuffle Model to be close to that of the Centralized Model.

Several related works have provided inspiration for our design employing the Discrete Fourier Transform (DFT) for private summation in the Single-Message Shuffle Model.
Rastogi and Nath~\cite{rastogi} introduced the idea of using a Fourier transform in the central privacy model in order to reduce the aggregate amount of privacy noise added; here, our contribution is to show a corresponding result in the shuffle model.
Selesnick \emph{et al.}~\cite{selesnick} describe numerous symmetric extensions to the DFT, each of which guaranteed a real-valued output for a real-valued input.
This proved useful for our protocol, since the representation of their data as a vector in a high-dimensional space is closely related to the representation of our data as vector-valued messages.
Finally, Cormode \emph{et al.}~\cite{cormode} explored the application of the DFT over the Boolean hypercube, also known as the Hadamard Transform, in the Local Model.
Their algorithms provide a useful link between the theory of the DFT and its application to a closely related model of DP, as well as illustrating the benefits of such a transform on the resulting dependencies.

\section{Preliminaries} \label{sec:prelim}

We consider randomized mechanisms~\cite{dwork} $\mathcal{M}$, $\mathcal{R}$ under domains $\mathbb{X}$, $\mathbb{Y}$, and apply them to input datasets $\vec{D}, \vec{D}'$ to generate (vector-valued) messages $\vec{x}_{i}, \vec{x}_{i}'$.
We write $[k] = \{ 1, \dots, k \}$ and $\mathbb{N}$ for the set of natural numbers.

\subsection{Models of Differential Privacy} \label{sec:dp}

The essence of Differential Privacy (DP) is the requirement that the contribution $\vec{x}_{i}$, of a user $i$, to a dataset $\vec{D} = (\vec{x}_{1}, \dots, \vec{x}_{n})$ does not have a significant impact on the outcome of the mechanism applied to that dataset.

Let us consider the \emph{centralized} model of DP, in which random noise is only introduced after the users' inputs are gathered by a (trusted) aggregator.
Consider further a dataset $\vec{D}'$ that differs from $\vec{D}$ only in the contribution of a single user, denoted $\vec{D} \simeq \vec{D}'$.
Given $\varepsilon \geq 0$ and $\delta \in (0,1)$, we define  a randomized mechanism $\mathcal{M}: \mathbb{X}^{n} \rightarrow \mathbb{Y}$ to be $(\varepsilon, \delta)$-differentially private if $\forall \vec{D} \simeq \vec{D}', \forall E \subseteq \mathbb{Y}$:
\[
\Pr[\mathcal{M}(\vec{D}) \in E] \leq e^{\varepsilon} \cdot \Pr[\mathcal{M}(\vec{D}') \in E] + \delta~\cite{dwork}.
\]

In this definition, we assume that the trusted aggregator obtains raw data from all users and introduces necessary mechanisms to provide privacy.

In the \emph{local} model of DP, each user $i$ independently uses randomness on their input $\vec{x}_{i} \in \mathbb{X}$ by using a \emph{local randomizer} $\mathcal{R}: \mathbb{X} \rightarrow \mathbb{Y}$ to obtain a perturbed result $\mathcal{R}(\vec{x}_{i})$.
We say that the local randomizer is $(\varepsilon, \delta)$-differentially private if $\forall \vec{D}, \vec{D}', \forall E \subseteq \mathbb{Y}$:
\begin{align*}
\Pr[\mathcal{R}(\vec{x}_{i}) \in E] \leq e^{\varepsilon} \cdot \Pr[\mathcal{R}(\vec{x}_{i}') \in E] + \delta~\cite{balleprivacyblanket}, 
\end{align*}

\noindent where $\vec{x}_{i}' \in \mathbb{X}$ is some other valid input vector that $i$ could hold.
The Local Model guarantees that any observer will not have access to the raw data from any of the users.
That is, it removes the requirement for trust in the aggregator.
The consequence of this removal of trust is that a higher level of noise per user must be tolerated to achieve the same privacy guarantee.

\subsection{Single-Message Shuffle Model} \label{sec:smsm}

The Single-Message Shuffle Model can be considered to sit in between the Centralized and Local Models of DP~\cite{balleprivacyblanket}.
Let a protocol $\mathcal{P}$ in the Single-Message Shuffle Model be of the form $\mathcal{P} = (\mathcal{R}, \mathcal{A})$, where $\mathcal{R}: \mathbb{X} \rightarrow \mathbb{Y}$ is the \emph{local randomizer}, and $\mathcal{A}: \mathbb{Y}^{n} \rightarrow \mathbb{Z}$ is the \emph{analyzer} of $\mathcal{P}$.

Overall, $\mathcal{P}$ implements a mechanism $\mathcal{P}: \mathbb{X}^{n} \rightarrow \mathbb{Z}$ as follows.
Each user $i$ independently applies the local randomizer to their message $\vec{x}_{i}$ to obtain a message $\vec{y}_{i} = \mathcal{R}(\vec{x}_{i})$.
Subsequently, the messages $(\vec{y}_{1}, \dots, \vec{y}_{n})$ are randomly permuted by a trusted \emph{shuffler} $\mathcal{S}: \mathbb{Y}^{n} \rightarrow \mathbb{Y}^{n}$.
The random permutation $\mathcal{S}(\vec{y}_{1}, \dots, \vec{y}_{n})$ is submitted to an untrusted data collector, who applies the analyzer $\mathcal{A}$ to obtain an output for the mechanism.
In summary, the output of $\mathcal{P}(\vec{x}_{1}, \dots, \vec{x}_{n})$ is given by:
\[
\mathcal{A} \circ \mathcal{S} \circ \mathcal{R}^{n}(\vec{x}) = \mathcal{A}(\mathcal{S}(\mathcal{R}(\vec{x}_{1}), \dots, \mathcal{R}(\vec{x}_{n}))).
\]

Note that the data collector observing the shuffled messages $\mathcal{S}(\vec{y}_{1}, \dots, \vec{y}_{n})$ obtains no information about which user generated each of the messages.
Therefore, the privacy of $\mathcal{P}$ relies on the indistinguishability between the shuffles $\mathcal{S} \circ \mathcal{R}^{n}(\vec{D})$ and $\mathcal{S} \circ \mathcal{R}^{n}(\vec{D}')$ for datasets $\vec{D} \simeq \vec{D}'$.
The analyzer can represent the shuffled messages as a \emph{histogram}, which counts the number of occurrences of the possible outputs of $\mathbb{Y}$.

\subsection{Measuring Accuracy} \label{sec:mse}

In Sections~\ref{sec:vectorsum} and \ref{sec:transform} we use the \emph{mean squared error} to compare the overall output of our new private summation protocol in the Single-Message Shuffle Model with the original dataset.
The MSE is used to measure the average squared difference in the comparison between a fixed input $f(\vec{D})$ to the randomized protocol $\mathcal{P}$, and its output $\mathcal{P}(\vec{D})$.
In this context,
\[
\text{MSE}(\mathcal{P}, \vec{D}) = \E \dftbig[ (\mathcal{P}(\vec{D}) - f(\vec{D}))^{2} \dftbig],
\]
where the expectation is taken over the randomness of $\mathcal{P}$.
Note when $\E[\mathcal{P}(\vec{D})] =  f(\vec{D})$, MSE is equivalent to variance, i.e.:
\[
\text{MSE}(\mathcal{P}, \vec{D}) = \E \dftbig[ (\mathcal{P}(\vec{D}) - \E[\mathcal{P}(\vec{D})])^{2} \dftbig] = \text{Var}[\mathcal{P}(\vec{D})].
\]

\section{Vector Sum in the Shuffle Model} \label{sec:vectorsum}

In this section we introduce our new protocol for vector summation in the Shuffle Model and tune its parameters to optimize accuracy.

\subsection{Basic Randomizer} \label{sec:basic}

First, we describe a basic local randomizer applied by each user $i$ to an input $x_i \in [k]$, a fundamental technique in privacy.
The output of this protocol is a (private) histogram of shuffled messages over the domain $[k]$.

The Local Randomizer $\mathcal{R}_{\gamma, k, n}^{PH}$, shown in Algorithm~\ref{alg:localrand}, applies a generalized \emph{randomized response} mechanism that returns the true message $x_i$ with probability $1 - \gamma$ and a uniformly random message with probability $\gamma$.
Such a basic randomizer is used by Balle \emph{et al.}~\cite{balleprivacyblanket} in the Single-Message Shuffle Model for scalar-valued messages, as well as in several other previous works in the Local Model~\cite{kairouzextremal, kairouzdiscrete, bhowmick}.
In Section~\ref{sec:privanalysis}, we find an appropriate $\gamma$ to optimize the proportion of random messages that are submitted, and therefore guarantee DP.

\begin{algorithm}[t]
\DontPrintSemicolon
\SetArgSty{textnormal}
\KwPP{\parbox[t]{2.1in}{\raggedright $\gamma \in [0,1]$, domain size $k$, and number of parties $n$}}\\
\KwIn{$x_i \in [k]$}
\KwOut{$y_i \in [k]$}
%\medskip
Sample $b \leftarrow$ {\tt Ber}$(\gamma)$\\
\lIf{$b=0$}{let $y_i \leftarrow x_i$}
\lElse{sample $y_i \leftarrow$ {\tt Unif}$([k])$}
\KwRet{$y_i$}
\caption{Local Randomizer $\mathcal{R}_{\gamma, k, n}^{PH}$}
\label{alg:localrand}
\end{algorithm}

We now describe how the presence of these random messages can form a `privacy blanket' to protect against a \emph{difference attack} on a particular user.
Suppose we apply Algorithm~\ref{alg:localrand} to the messages from all $n$ users.
Note that a subset $B$ of approximately $\gamma n$ of these users returned a uniformly random message, while the remaining users returned their true message.
Following Balle \emph{et al.}~\cite{balleprivacyblanket}, the analyzer can represent the messages sent by users in $B$ by a histogram $Y_1$ of uniformly random messages, and can form a histogram $Y_2$ of truthful messages from users not in $B$.
As these subsets are mutually exclusive and collectively exhaustive, the information represented by the analyzer is equivalent to the histogram $Y = Y_1 \cup Y_2$.

Consider two neighbouring datasets, each consisting of $n$ messages from $n$ users, that differ only on the input from the $n^{\text{th}}$ user.
To simplify the discussion and subsequent proof, we temporarily omit the action of the shuffler.
By the post-processing property of DP, this can be reintroduced later on without adversely affecting the privacy guarantees.
To achieve DP we need to find an appropriate $\gamma$ such that when Algorithm~\ref{alg:localrand} is applied, the change in $Y$ is appropriately bounded.
As the knowledge of either the set $B$ or the messages from the first $n - 1$ users does not affect DP, we can assume that the analyzer knows both of these details.
This lets the analyzer remove all of the truthful messages associated with the first $n - 1$ users from $Y$.

If the $n^{\text{th}}$ user is in $B$, this means their submission is independent of their input, so we trivially satisfy DP.
Otherwise, the (curious) analyzer knows that the $n^{\text{th}}$ user has submitted their true message $x_n$.
The analyzer can remove all of the truthful messages associated with the first $n-1$ users from $Y$, and obtain $Y_1 \cup \{ x_n \}$.
The subsequent privacy analysis will argue that this does not reveal $x_n$ if $\gamma$ is set so that $Y_1$, the histogram of random messages, appropriately `hides' $x_n$.

\subsection{Private Summation of Vector-Valued Messages} \label{sec:protocol}

Here, we extend the protocol from Section~\ref{sec:basic} to address the problem of computing the sum of $n$ real vectors, each of the form $\vec{x}_{i} = (x_{i}^{(1)}, \dots, x_{i}^{(d)}) \in [0,1]^{d}$, in the Single-Message Shuffle Model.
Specifically, we analyze the utility of a protocol $\mathcal{P}_{d, k, n, t} = (\mathcal{R}_{d, k, n, t}, \mathcal{A}_{d, k, t})$ for this purpose, by using the MSE from Section~\ref{sec:mse} as the accuracy measure.
In the scalar case, each user applies the protocol to their entire input~\cite{balleprivacyblanket}.
Moving to the vector case, we allow each user to independently sample a set of $1 \le t \le d$ coordinates from their vector to report.
Our analysis allows us to optimize the parameter $t$.

Hence, the first step of the Local Randomizer $\mathcal{R}_{d, k, n, t}$, presented in Algorithm~\ref{alg:fixedpoint}, is to uniformly sample $t$ coordinates $(\alpha_{i1}, \dots, \alpha_{it}) \in [d]$ (without replacement) from each vector $\vec{x}_{i}$.
To compute a differentially private approximation of $\sum_{i} \vec{x}_{i}$, we fix a quantization level $k$.
Then we randomly round each $x_{i}^{(\alpha_{ij})}$ to obtain $\bar{x}_{i}^{(\alpha_{ij})}$ as either $\floor{{x}_{i}^{(\alpha_{ij})} k}$ or $\ceil{{x}_{i}^{(\alpha_{ij})} k}$.
Next, we apply the randomized response mechanism from Algorithm~\ref{alg:localrand} to each $\bar{x}_{i}^{(\alpha_{ij})}$, which sets each output $y_{i}^{(\alpha_{ij})}$ independently to be equal to $\bar{x}_{i}^{(\alpha_{ij})}$ with probability $1 - \gamma$, or a random value in $\{ 0, 1, \dots, k \}$ with probability $\gamma$.
Each $y_{i}^{(\alpha_{ij})}$ will contribute to a histogram of the form $(y_{1}^{(\alpha_{ij})}, \dots, y_{n}^{(\alpha_{ij})})$ as in Section~\ref{sec:basic}.

The Analyzer $\mathcal{A}_{d, k, t}$, shown in Algorithm~\ref{alg:analyzer}, aggregates the histograms to approximate $\sum_{i} \vec{x}_{i}$ by post-processing the vectors coordinate-wise.
More precisely, the analyzer sets each output $y_{i}^{(\alpha_{ij})}$ to $y_{i}^{(l)}$, where the new label $l$ is from its corresponding input $x_{i}^{(l)}$ of the original $d$-dimensional vector $\vec{x}_{i}$.
For all inputs $x_{i}^{(l)}$ that were not sampled, we set $y_{i}^{(l)} = 0$.
Subsequently, the analyzer aggregates the sets of outputs from all users corresponding to each of those $l$ coordinates in turn, so that a $d$-dimensional vector is formed.
Finally, a standard debiasing step is applied to this vector to remove the scaling and rounding applied to each submission.
\texttt{DeBias} returns an unbiased estimator, $\vec{z}$, which calculates an estimate of the true sum of the vectors by subtracting the expected uniform noise from the randomized sum of the vectors.

Note that Algorithms~\ref{alg:fixedpoint} and \ref{alg:analyzer} are both required to generalize the scalar approach from Balle \emph{et al.}~\cite{balleprivacyblanket} to vectors.
In Section~\ref{sec:privanalysis}, we carefully prove that we can combine Algorithms~\ref{alg:fixedpoint} and \ref{alg:analyzer} to privately compute the sum of vector-valued messages in the Shuffle Model, thus resulting in our first contribution.

\begin{algorithm}[t]
\DontPrintSemicolon
\SetArgSty{textnormal}
\KwPP{$k$, $t$, dimension $d$, and number of parties $n$}\\
\KwIn{$\vec{x}_{i} = (x_{i}^{(1)}, \dots, x_{i}^{(d)}) \in [0,1]^{d}$}
\KwOut{$\vec{y}_{i} = (y_{i}^{(\alpha_{i1})}, \dots, y_{i}^{(\alpha_{it})}) \in \{0,1, \dots, k\}^{t}$}
\medskip
Sample $(\alpha_{i1}, \dots, \alpha_{it}) \leftarrow$ {\tt Unif}$([d])$\\
Let $\bar{x}_{i}^{(\alpha_{ij})} \leftarrow \floor[\dftbig]{x_{i}^{(\alpha_{ij})} k} \ +$ {\tt Ber}$ (x_{i}^{(\alpha_{ij})} k - \floor[\dftbig]{x_{i}^{(\alpha_{ij})} k})$
\Comment*[r]{\parbox[t]{4in}{\raggedright $\bar{x}_{i}^{(\alpha_{ij})}$: encoding of $x_{i}^{(\alpha_{ij})}$ with precision $k$}}
\vspace*{-.3cm}
\Comment*[r]{\parbox[t]{4in}{\raggedright $y_{i}^{(\alpha_{ij})}$: apply \textbf{Algorithm 1} to each $\bar{x}_{i}^{(\alpha_{ij})}$}}
\KwRet{$\vec{y}_{i} = (y_{i}^{(\alpha_{i1})}, \dots, y_{i}^{(\alpha_{it})})$}
\caption{Local Randomizer $\mathcal{R}_{d, k, n, t}$}
\label{alg:fixedpoint}
\end{algorithm}

\begin{algorithm}[t]
\DontPrintSemicolon
\SetArgSty{textnormal}
\KwPP{$k$, $t$, and dimension $d$}\\
\KwIn{Multiset $\bigl\{ \vec{y}_{i} \bigr\}_{i \in [n]}$, with $(y_{i}^{(\alpha_{i1})}, \dots, y_{i}^{(\alpha_{it})}) \in \{0,1, \dots, k\}^{t}$}
\KwOut{$\vec{z} = (z^{(1)}, \dots, z^{(d)}) \in [0,1]^{d}$}
\medskip
Let $y_{i}^{(l)} \leftarrow y_{i}^{(\alpha_{ij})}$
\medskip
\Comment*[r]{\parbox[t]{4.3in}{\raggedright $y_{i}^{(\alpha_{ij})}$: submission corresponding to $x_{i}^{(l)}$}}
Let $(\hat{z}^{(1)}, \dots, \hat{z}^{(d)}) \leftarrow (\frac{1}{k} \sum_{i} y_{i}^{(1)}, \dots, \frac{1}{k} \sum_{i} y_{i}^{(d)})$\\
Let $(z^{(1)}, \dots, z^{(d)}) \leftarrow (${\tt DeBias}$ (\hat{z}^{(1)}), \dots, ${\tt DeBias}$ (\hat{z}^{(d)}))$
\Comment*[r]{\parbox[t]{4in}{\raggedright {\tt DeBias}$(\hat{z}^{(l)}) = ( \hat{z}^{(l)} - \frac{\gamma}{2} \cdot | y_{i}^{(l)}|) / (1 - \gamma)$}}
\KwRet{$\vec{z} = (z^{(1)}, \dots, z^{(d)})$}
\caption{Analyzer $\mathcal{A}_{d, k, t}$}
\label{alg:analyzer}
\end{algorithm}

\subsection{Privacy Analysis of Algorithms~\ref{alg:fixedpoint} and \ref{alg:analyzer}} \label{sec:privanalysis}

In this section, we will find an appropriate $\gamma$ that ensures that the mechanism described in Algorithms~\ref{alg:fixedpoint} and \ref{alg:analyzer} satisfies $(\varepsilon, \delta)$-DP for vector-valued messages in the Single-Message Shuffle Model.
To achieve this, we prove the following theorem, where we initially assume $\varepsilon < 1$ to simplify our computations.

At the end of this section, we discuss how to cover the additional case $1 \leq \varepsilon < 6$ to suit our experimental study.
This moderate range of $\varepsilon$ is justified by the fact that privacy is weak for $\varepsilon \geq 6$.
The upper limit of $\epsilon$ is arbitrary: it can be set to any positive integer, with an almost identical proof in each case.
Therefore, we have chosen $6$ as the limit due to practical usage, as echoed by the literature~\cite{dwork,apple,microsoft}.

\begin{thm} \label{thm:main}
The shuffled mechanism $\mathcal{M} = \mathcal{S} \circ \mathcal{R}_{d, k, n, t}$ is $(\varepsilon, \delta)$-DP for any $d, k, n \in \mathbb{N}$, $\{ t \in \mathbb{N} \ | \ t \in [d] \}$, $\varepsilon < 6$ and $\delta \in \left( 0,1 \right]$ such that:
\[
\gamma = 
\begin{cases}
\frac{56dk \log(1/\delta) \log(2t/\delta)}{(n - 1) \varepsilon^{2}}, & \text{when}\ \varepsilon < 1 \\
\frac{2016dk \log(1/\delta) \log(2t/\delta)}{(n - 1) \varepsilon^{2}}, & \text{when}\ 1 \leq \varepsilon < 6. \\
\end{cases}
\]
\end{thm}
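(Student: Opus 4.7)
My plan is to reduce the vector-valued shuffled mechanism to $t$ independent instances of the scalar shuffle result of Balle \emph{et al.}~\cite{balleprivacyblanket}, combined via advanced composition, with the factor of $d$ emerging because any single coordinate is seen by only about $nt/d$ of the $n$ users.

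The first move is to condition on the sampling pattern, i.e.\ reveal each user's sampled index set $S_i\subset[d]$ to the adversary. By post-processing this cannot weaken the DP guarantee, but it crucially makes the randomized responses independent across coordinates (Algorithm~\ref{alg:fixedpoint} draws each coordinate's Bernoulli independently). A neighbouring pair $\vec{D}\simeq\vec{D}'$ now differs only at the $t$ coordinates $l\in S_n$, so it suffices to bound the joint divergence over these $t$ per-coordinate histograms. For each such $l$, the analyzer's view is exactly a shuffled generalized randomized-response histogram over the domain $\{0,\dots,k\}$ of size $k+1$, formed from $n_l := |\{i : l\in S_i\}|$ users, so Balle \emph{et al.}'s scalar privacy-blanket theorem gives $(\varepsilon_0,\delta_0)$-DP as soon as $\gamma$ is of order $\frac{(k+1)\log(1/\delta_0)}{(n_l-1)\varepsilon_0^2}$. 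Since $n_l-1\sim\text{Binomial}(n-1,\,t/d)$, a Chernoff bound yields $n_l-1\geq (n-1)t/(2d)$ for every $l\in S_n$ simultaneously, except on an event whose total probability I absorb into $\delta$ via a union bound of size $t$.

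Next I would apply the advanced composition theorem across the $t$ affected coordinates for user $n$: composing $t$ mechanisms that are each $(\varepsilon_0,\delta_0)$-DP yields $(\varepsilon,\delta)$-DP with $\varepsilon \approx \sqrt{2t\log(1/\delta')}\,\varepsilon_0$ and total slack $\delta \approx t\delta_0+\delta'$. Choosing $\delta_0=\delta/(2t)$ and splitting the remaining $\delta$-budget between $\delta'$ and the Chernoff failure event produces the $\log(2t/\delta)$ factor. Back-solving $\varepsilon_0$ of order $\varepsilon/\sqrt{t\log(1/\delta)}$ and substituting $n_l-1\geq (n-1)t/(2d)$ into the per-coordinate bound gives $\gamma$ of order $\frac{dk\log(1/\delta)\log(2t/\delta)}{(n-1)\varepsilon^2}$, matching the stated form; careful constant tracking then pins down the leading $56$ for $\varepsilon<1$.

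The main obstacle I expect is constant bookkeeping: the advertised $56$ and $2016$ leave essentially no slack, so the Chernoff tail, Balle's scalar constant, the advanced-composition constants, and the $\delta$-splitting all have to be tuned simultaneously. The jump to $2016=36\cdot 56$ for $1\leq\varepsilon<6$ arises because once $\varepsilon$ exceeds $1$ the per-coordinate $\varepsilon_0$ can escape the small-$\varepsilon_0$ regime in which advanced composition (and Balle's scalar theorem) is tight; the cleanest remedy is to reprove the bound with $\varepsilon/6$ in place of $\varepsilon$, which is always below $1$, and absorb the resulting factor $6^2=36$ into the constant. The one genuinely nontrivial conceptual ingredient, as opposed to bookkeeping, is the conditional-independence observation above, which legitimises analysing user-level (tuple-level) shuffling as if it were an independent shuffling per coordinate.
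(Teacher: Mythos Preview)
Your proposal is correct and follows essentially the same route as the paper: decompose the vector view into $t$ per-coordinate scalar shuffles (the paper calls these ``scalar views''), invoke the Balle \emph{et al.}\ privacy-blanket analysis on each with a Chernoff bound controlling the per-coordinate sample count $n_l\approx (n-1)t/d$, stitch them together via advanced composition with $\varepsilon'=\varepsilon/(2\sqrt{2t\log(1/\delta)})$ and $\delta'=\delta/t$, and handle $1\le\varepsilon<6$ by rescaling $\varepsilon$ by a factor of $6$ to force $\varepsilon'<1$, picking up the extra $36$. Your framing via ``condition on the sampling pattern'' is exactly the paper's device of revealing $\vec{b}$ in the view, and your Chernoff lower bound on $n_l$ is in fact the correct direction (the paper states $s\le 2\E[s]$, which appears to be a typo for the needed lower tail).
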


\begin{proof}
Let $\vec{D} = (\vec{x}_{1}, \dots, \vec{x}_{n})$ and $\vec{D}' = (\vec{x}_{1}, \dots, \vec{x}'_{n})$ be the two neighbouring datasets differing only in the input of the $n$\textsuperscript{th} user, as used in Section~\ref{sec:basic}.
Here each vector-valued message $\vec{x}_{i}$ is of the form $(x_{i}^{(1)}, \dots, x_{i}^{(d)})$.
Recall from Section~\ref{sec:basic} that we assume that the analyzer can see the users in $B$ (i.e., the subset of users that returned a uniformly random message), as well as the inputs from the first $n - 1$ users.

We now introduce the \emph{vector view} $\text{VView}_{\mathcal{M}}(\vec{D})$ as the collection of information that the analyzer is able to see after the mechanism $\mathcal{M}$ is applied to all vector-valued messages in the dataset $\vec{D}$.
$\text{VView}_{\mathcal{M}}(\vec{D})$ is defined as the tuple 
$(\vec{Y}, \vec{D}_{\cap}, \vec{b})$, where $\vec{Y}$ is the multiset containing the outputs $\{ \vec{y}_{1}, \dots, \vec{y}_{n} \}$ of the mechanism $\mathcal{M}(\vec{D})$, $\vec{D}_{\cap}$ is the vector containing the inputs $(\vec{x}_{1}, \dots, \vec{x}_{n - 1})$ from the first $n - 1$ users, and $\vec{b}$ contains binary vectors $(\vec{b}_{1}, \dots, \vec{b}_{n})$ which indicate for which coordinates each user reports truthful information.
This vector view can be projected to $t$ overlapping \emph{scalar views} by applying Algorithm~\ref{alg:fixedpoint} only to the $j^{\text{th}}$ uniformly sampled coordinate $\alpha_{ij} \in [d]$ from each user, where $j \in [t]$.
The $j^{\text{th}}$ scalar view $\text{View}_{\mathcal{M}}^{(\alpha_{ij})}(\vec{D})$ of $\text{VView}_{\mathcal{M}}(\vec{D})$ is defined as the tuple $(\vec{Y}^{(\alpha_{ij})}, \vec{D}_{\cap}^{(\alpha_{ij})},  \vec{b}^{(\alpha_{ij})})$, where:
\begin{align*}
\vec{Y}^{(\alpha_{ij})} & = \mathcal{M} (\vec{D}^{(\alpha_{ij})}) = \{ y_{1}^{(\alpha_{ij})}, \dots, y_{n}^{(\alpha_{ij})} \}, \\ \vec{D}_{\cap}^{(\alpha_{ij})} & = (x_{1}^{(\alpha_{ij})}, \dots, x_{n - 1}^{(\alpha_{ij})}) \\
\text{and} \quad \vec{b}^{(\alpha_{ij})} & = (b_{1}^{(\alpha_{ij})}, \dots, b_{n}^{(\alpha_{ij})})
\end{align*}
are the analogous definitions of $\vec{Y}$, $\vec{D}_{\cap}$ and $\vec{b}$, but containing only the information referring to the $j^{\text{th}}$ uniformly sampled coordinate of each vector-valued message.

The following \emph{advanced composition} results will be used in our setting to get a tight upper bound:

\begin{thm}[Dwork \emph{et al.}~\cite{dwork}] \label{thm:advcomp}
For all $\varepsilon', \delta', \delta \geq 0$, the class of $(\varepsilon', \delta')$-differentially private mechanisms satisfies $(\varepsilon, r \delta' + \delta)$-differential privacy under $r$-fold adaptive composition for:
\[
\varepsilon = \sqrt{2r \log(1/\delta)} \varepsilon' + r \varepsilon' \dftbig( e^{\varepsilon'} - 1 \dftbig).
\]
\end{thm}

\begin{cor} \label{cor:advcomp}
Given target privacy parameters $0 < \varepsilon < 1$ and $\delta > 0$, to ensure $(\varepsilon, r \delta' + \delta)$ cumulative privacy loss over $r$ mechanisms, it suffices that each mechanism is $(\varepsilon', \delta')$-DP, where:
\[
\varepsilon' = \frac{\varepsilon}{2 \sqrt{2r \log(1/\delta)}}.
\]
\end{cor}

\noindent To show that $\text{VView}_{\mathcal{M}}(\vec{D})$ satisfies $(\varepsilon, \delta)$-DP it suffices to prove that:
\[
\Pr_{\widetilde{\mathsf{V}} \sim \text{VView}_{\mathcal{M}}(\vec{D})} \!\left[ \frac{ \Pr [ \text{VView}_{\mathcal{M}}(\vec{D}) = \widetilde{\mathsf{V}} ] }
{ \Pr [ \text{VView}_{\mathcal{M}}(\vec{D}') = \widetilde{\mathsf{V}} ] } \geq e^{\varepsilon} \right] \leq \delta. \label{eq:vector} \tag{$1$}
\]

\noindent By considering this vector view as a union of overlapping scalar views, and letting $r = t$ in Corollary~\ref{cor:advcomp}, it is sufficient to derive \eqref{eq:vector} from:
\[
\Pr_{\mathsf{V}_{\alpha_{ij}} \sim \text{View}_{\mathcal{M}}^{(\alpha_{ij})}(\vec{D})} \!\left[ \frac{ \Pr [ \text{View}_{\mathcal{M}}^{(\alpha_{ij})}(\vec{D}) = \mathsf{V}_{\alpha_{ij}} ] }
{ \Pr [ \text{View}_{\mathcal{M}}^{(\alpha_{ij})}(\vec{D}') = \mathsf{V}_{\alpha_{ij}} ] } \geq e^{\varepsilon'} \right] \leq \delta', \label{eq:scalar} \tag{$2$}
\]

\noindent where $\widetilde{\mathsf{V}} = \bigcup_{\alpha_{ij}} \mathsf{V}_{\alpha_{ij}}$, $\varepsilon' = \frac{\varepsilon}{2 \sqrt{2t \log(1/\delta)}}$ and $\delta' = \frac{\delta}{t}$.

\begin{lem} \label{lem:implies}
Condition \eqref{eq:scalar} implies condition \eqref{eq:vector}.
\end{lem}

\begin{proof}
We can express $\text{VView}_{\mathcal{M}}(\vec{D})$ as the {composition} of the $t$ scalar views $\text{View}_{\mathcal{M}}^{(\alpha_{i1})}, \dots, \text{View}_{\mathcal{M}}^{(\alpha_{it})}$, as:
\begin{align*}
\Pr&[ \text{VView}_{\mathcal{M}}(\vec{D}) = \widetilde{\mathsf{V}} ] \\
&= \Pr [ \text{View}_{\mathcal{M}}^{(\alpha_{i1})}(\vec{D}) = \mathsf{V}_{\alpha_{i1}} \wedge \cdots \wedge \text{View}_{\mathcal{M}}^{(\alpha_{it})}(\vec{D}) = \mathsf{V}_{\alpha_{it}} ] \\
&= \Pr [ \text{View}_{\mathcal{M}}^{(\alpha_{i1})}(\vec{D}) = \mathsf{V}_{\alpha_{i1}} ] \boldsymbol{\cdot} \cdots \boldsymbol{\cdot} \Pr [ \text{View}_{\mathcal{M}}^{(\alpha_{it})}(\vec{D}) = \mathsf{V}_{\alpha_{it}}].
\end{align*}

Our desired result is immediate by applying Corollary~\ref{cor:advcomp}, which states that the use of $t$ overlapping $(\varepsilon', \delta')$-DP mechanisms, when taken together, is $(\varepsilon, \delta)$-DP.
This applies in our setting, since we have assumed that $\text{VView}_{\mathcal{M}}(\vec{D})$ satisfies the requirements of $(\varepsilon, \delta)$-DP, and that each of the $t$ overlapping scalar views is formed identically but for a different uniformly sampled coordinate of the vector-valued messages.
\end{proof}

To complete the proof of Theorem~\ref{thm:main} for $\varepsilon < 1$, it remains to show that for a uniformly sampled coordinate $\alpha_{ij} \in [d]$, $\text{View}_{\mathcal{M}}^{(\alpha_{ij})}(\vec{D})$ satisfies $(\varepsilon', \delta')$-DP.

\begin{restatable}{lem}{primelemma} \label{lem:scalar}
Condition \eqref{eq:scalar} holds.
\end{restatable}

\begin{proof}
See Appendix.
\end{proof}

We now show that the above proof can be adjusted to cover the additional case $1 \leq \varepsilon < 6$.
This will be sufficient to complete the proof of our main Theorem~\ref{thm:main}.

First, we scale the setting of $\varepsilon'$ by a multiple of $6$ in Corollary~\ref{cor:advcomp} so that the advanced composition property holds for all $1 \leq \varepsilon < 6$.
We now insert $\varepsilon' = \frac{\varepsilon}{12 \sqrt{2r \log(1/\delta)}}$ into the proof of Theorem~\ref{thm:main}, resulting in a change of constant from $56$ to $2016$.
\end{proof}

\subsection{Accuracy Bounds for Shuffled Vector Sum} \label{sec:tightupperbound}

We now formulate an upper bound for the MSE of our protocol, and then identify the value(s) of $t$ that minimize this upper bound.

First, note that encoding the coordinate $x_{i}^{(\alpha_{ij})}$ as $\bar{x}_{i}^{(\alpha_{ij})} = \floor[\dftbig]{x_{i}^{(\alpha_{ij})} k} \ +$ {\tt Ber}$ ( x_{i}^{(\alpha_{ij})} k - \floor[\dftbig]{x_{i}^{(\alpha_{ij})} k})$ in Algorithm~\ref{alg:fixedpoint} ensures that $\mathbb{E}[\bar{x}_{i}^{(\alpha_{ij})}/k] = \mathbb{E} [x_{i}^{(\alpha_{ij})}]$.
This means that our protocol is unbiased.
For any unbiased random variable $X$ with $a < X < b$ then $\text{Var}[X] \leq (b-a)^2/4$, and so the MSE per coordinate due to the fixed-point approximation of the true vector in $\mathcal{R}_{d, k, n, t}$  is at most $\frac{1}{4k^{2}}$.
Meanwhile, the MSE when $\mathcal{R}_{d, k, n, t}$ submits a random vector is at most $\frac{1}{2}$ per coordinate.

We now use the unbiasedness of our protocol to obtain a result for estimating the squared error between the estimated average vector and the true average vector.
When calculating the MSE, each coordinate location is used with expectation $n/d$.
Therefore, we define the \emph{normalized} MSE, or $\widehat{\textnormal{MSE}}$, as the normalization of the MSE by a factor of $(n/d)^2$.

\begin{thm} \label{thm:mse}
For any $d, n \in \mathbb{N}$, $\{ t \in \mathbb{N} \ | \ t \in [d] \}$, $\varepsilon < 6$ and $\delta \in \!\left( 0,1 \right]$, there exists a parameter $k$ such that $\mathcal{P}_{d, k, n, t}$ is $(\varepsilon, \delta)$-DP and
\[
\widehat{\textnormal{MSE}}(\mathcal{P}_{d, k, n, t}) =
\begin{cases}
\frac{2t d^{8/3} (14 \log(1/\delta) \log(2t/\delta))^{2/3}}{(1-\gamma)^{2} n^{5/3} \varepsilon^{4/3}}, \\ \quad \text{when}\ \varepsilon < 1 \\
\frac{8t d^{8/3} (63 \log(1/\delta) \log(2t/\delta))^{2/3}}{(1-\gamma)^{2} n^{5/3} \varepsilon^{4/3}}, \\ \quad \text{when}\ 1 \leq \varepsilon < 6, \\
\end{cases}
\]
where $\widehat{\textnormal{MSE}}$ denotes the squared error between the estimated average vector and the true average vector.
\end{thm}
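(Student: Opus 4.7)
The plan is to exploit the unbiasedness of $\mathcal{P}_{d,k,n,t}$ (so MSE equals variance), decompose the variance coordinate by coordinate, bound each contribution using the two elementary facts about the randomizer already noted, and then choose $k$ so as to balance the quantization error against the noise injected by randomized response.

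First I would argue that for each coordinate $l \in [d]$, the estimate $z^{(l)}$ is an average of contributions from the (random) set of users who sampled $l$ in Algorithm~\ref{alg:fixedpoint}. Each user samples $l$ with probability $t/d$, so in expectation $nt/d$ users contribute. Conditioned on contributing, a user's rescaled output $y_i^{(l)}/k$ is either a quantized version of $x_i^{(l)}$ (with probability $1-\gamma$, variance at most $1/(4k^2)$ by the observation $\mathrm{Var}[X]\le(b-a)^2/4$ for $a<X<b$) or a uniform sample from $\{0,1/k,\dots,1\}$ (with probability $\gamma$, variance at most $1/2$). Combining these, and then dividing by $(1-\gamma)$ in the \texttt{DeBias} step, the per-coordinate variance is bounded by
\[
\mathrm{Var}[z^{(l)}] \;\le\; \frac{nt/d}{(1-\gamma)^2}\dftbig[(1-\gamma)\tfrac{1}{4k^2} + \gamma \cdot \tfrac{1}{2}\dftbig].
\]
Summing over $l\in[d]$ and normalizing by $(n/d)^2$ gives $\widehat{\mathrm{MSE}} \le \frac{d^2 t}{n(1-\gamma)^2}\bigl[(1-\gamma)/(4k^2) + \gamma/2\bigr]$.

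Next I would substitute the value of $\gamma$ from Theorem~\ref{thm:main}. Writing $C := \frac{d\log(1/\delta)\log(2t/\delta)}{(n-1)\varepsilon^2}$, we have $\gamma = 56\,Ck$ in the regime $\varepsilon<1$, so (up to the $1-\gamma$ factor which will be benign because the optimum forces $\gamma$ to stay small) the bracketed term behaves like $\frac{1}{4k^2} + 28\,Ck$. The optimization in $k$ is the heart of the argument: the quantization contribution $\Theta(1/k^2)$ decreases in $k$ while the blanket-noise contribution $\Theta(Ck)$ increases, and the standard cube-root balance $k^\star \asymp C^{-1/3}$ makes both terms equal to $\Theta(C^{2/3})$. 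Plugging $C^{2/3} \asymp d^{2/3}(\log(1/\delta)\log(2t/\delta))^{2/3}/(n^{2/3}\varepsilon^{4/3})$ back and multiplying by the prefactor $d^2 t/n$ yields exactly the announced shape $\widehat{\mathrm{MSE}} \asymp \frac{t\,d^{8/3}(\log(1/\delta)\log(2t/\delta))^{2/3}}{(1-\gamma)^2 n^{5/3}\varepsilon^{4/3}}$; the explicit constant $14 = 56/4$ comes out of tracking the optimizer carefully. Finally, the case $1\le\varepsilon<6$ is handled identically by using the larger constant $2016$ from Theorem~\ref{thm:main}, which propagates to $63 = 2016/32$ (or the closest clean constant obtained by redoing the cube-root balance), leaving the exponents in $d$, $n$, $t$, $\varepsilon$, and $\log$ unchanged.

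The main obstacle I expect is not the optimization itself but the bookkeeping around the random sampling step: strictly speaking, $\mathrm{Var}[z^{(l)}]$ includes a contribution from the randomness in \emph{which} users select coordinate $l$, and this has to be shown to be absorbed into (or dominated by) the two variance sources above, using the fact that the per-user contribution to $y_i^{(l)}/k$ is bounded in $[0,1]$ and that the sampling is uniform without replacement across the $t$ coordinates. A secondary subtlety is verifying that at the optimum $k^\star$ the product $\gamma = 56\,Ck^\star$ is indeed bounded away from $1$, so that dividing by $1-\gamma$ in \texttt{DeBias} is legitimate and the $(1-\gamma)^2$ factor in the statement is meaningful; this is automatic whenever $n$ is large enough relative to $d$, which is the interesting regime anyway.
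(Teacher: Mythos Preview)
Your proposal is correct and follows essentially the same route as the paper: unbiasedness reduces MSE to variance, the per-term variance is bounded by $(1-\gamma)/(4k^2)+\gamma/2$, the value of $\gamma$ from Theorem~\ref{thm:main} is substituted, and $k$ is chosen via the cube-root balance.

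One small but useful difference: the paper decomposes the sum over users $i\in[n]$ and sampled indices $j\in[t]$ rather than over output coordinates $l\in[d]$. Because each user contributes exactly $t$ terms, this gives $\operatorname{MSE}\le \frac{tn}{(1-\gamma)^2}\bigl[(1-\gamma)/(4k^2)+\gamma/2\bigr]$ directly, with no randomness left in the index set being summed over. This completely sidesteps what you flag as your ``main obstacle'' (the variance contribution from \emph{which} users happened to pick coordinate $l$); after normalizing by $(n/d)^2$ you recover precisely your expression $\frac{d^2 t}{n(1-\gamma)^2}[\cdots]$. So if you reorganize the sum as $\sum_{i=1}^n\sum_{j=1}^t$ instead of $\sum_{l=1}^d$, the bookkeeping issue disappears and the rest of your argument goes through unchanged.
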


\begin{proof}
We consider the $\sum_{l = 1}^{d} {\tt DeBias}(\hat{z}^{(l)})$ of $\mathcal{P}_{d, k, n, t}$ compared to the corresponding input $\sum_{j = 1}^{t} \sum_{i = 1}^{n} x_{i}^{(\alpha_{ij})}$ over the dataset $\vec{D}$.
We use the bounds on the variance of the randomized response mechanism from Theorem~\ref{thm:mse} to give us an upper bound for this comparison.

\begin{align*}
\operatorname{MSE}&(\mathcal{P}_{d, k, n, t}) \\
&= \sup_{\vec{D}} \E \!\left[ \!\left\| \sum_{l = 1}^{d} {\tt DeBias} ( \hat{z}^{(l)} ) \hspace{.5mm} e_{l} - \sum_{j = 1}^{t} \sum_{i = 1}^{n} x_{i}^{(\alpha_{ij})} \hspace{.5mm} e_{\alpha_{ij}} \right\|_{2}^{2} \ \right] \\
&\textnormal{(where} \ e_{l} \ \textnormal{is the} \ l\textsuperscript{th} \ \textnormal{basis vector)} \\
&= \sup_{\vec{D}} \E \!\left[ \!\left( \sum_{j = 1}^{t} \sum_{i = 1}^{n} \!\left( {\tt DeBias} ( y_{i}^{(\alpha_{ij})}/k ) - x_{i}^{(\alpha_{ij})} \right) \right)^{2} \ \right] \\
&= \sup_{\vec{D}} \sum_{j = 1}^{t} \sum_{i = 1}^{n} \E \!\left[ \!\left( {\tt DeBias} ( y_{i}^{(\alpha_{ij})}/k ) - x_{i}^{(\alpha_{ij})} \right)^{2} \ \right] \\
&\textnormal{(squared random variables are unbiased and independent)} \\
&= \sup_{\vec{D}} \sum_{j = 1}^{t} \sum_{i = 1}^{n} \operatorname{Var} \!\left[ {\tt DeBias} ( y_{i}^{(\alpha_{ij})}/k ) \right] \\
&= \frac{tn}{(1 - \gamma)^{2}} \sup_{x_{1}^{(\alpha_{i1})}} \text{Var} [ y_{1}^{(\alpha_{i1})}/k ]
\le \frac{tn}{(1 - \gamma)^{2}} \!\left( \frac{1 - \gamma}{4k^{2}} + \frac{\gamma}{2} \right) \\
&\le \frac{tn}{(1 - \gamma)^{2}} \!\left( \frac{1}{4k^{2}} + \frac{A_{\varepsilon} dk  \log(1/\delta)\log(2t/\delta)}{(n-1) \varepsilon^2} \right),
\end{align*}

\noindent where $A_{\varepsilon} = 28$ when $\varepsilon < 1$, and $A_{\varepsilon} = 1008$ when $1 \leq \varepsilon < 6$.
In other words, $A_{\varepsilon}$ is equal to half the constant term in the expression of $\gamma$ stated in Theorem~\ref{thm:main}.
The choice
$k = \frac{(n-1) \varepsilon^{2}}{4A_{\varepsilon} d \log(1/\delta) \log(2t/\delta)}$ minimizes the bracketed sum above and the bounds in the statement of the theorem follow.
\end{proof}

To obtain the error between the estimated average vector and the true average vector, we simply take the square root of the result obtained in Theorem~\ref{thm:mse}.

\begin{cor} \label{cor:query}
For every statistical query $q: \mathcal{X} \mapsto [0,1]^{d}$, $d, n \in \mathbb{N}$, $\{ t \in \mathbb{N} \ | \ t \in [d] \}$, $\varepsilon < 6$ and $\delta \in \!\left( 0,1 \right]$, there is an $(\varepsilon, \delta)$-DP $n$-party unbiased protocol for estimating $\frac{d}{n} \sum_{i} q(\vec{x}_{i})$ in the Single-Message Shuffle Model with standard deviation
\[
\hat{\sigma}(\mathcal{P}_{d, k, n, t}) =
\begin{cases}
\frac{(2t)^{1/2} d^{4/3} (14 \log(1/\delta) \log(2t/\delta))^{1/3}}{(1-\gamma) n^{5/6} \varepsilon^{2/3}}, \\ \quad \text{when}\ \varepsilon < 1 \\
\frac{(8t)^{1/2} d^{4/3} (63 \log(1/\delta) \log(2t/\delta))^{1/3}}{(1-\gamma) n^{5/6} \varepsilon^{2/3}}, \\ \quad \text{when}\ 1 \leq \varepsilon < 6, \\
\end{cases}
\]
where $\hat{\sigma}$ denotes the error between the estimated average vector and the true average vector.
\end{cor}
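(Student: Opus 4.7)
The plan is to obtain the corollary as a direct square-root consequence of Theorem~\ref{thm:mse}, together with the post-processing property of differential privacy. Given a statistical query $q : \mathcal{X} \to [0,1]^{d}$, I would first apply $q$ locally to each user's raw input $\vec{x}_{i}$ to obtain a transformed input $q(\vec{x}_{i}) \in [0,1]^{d}$. Because this is a per-user preprocessing step that touches only user $i$'s own data, feeding the vectors $q(\vec{x}_{1}), \ldots, q(\vec{x}_{n})$ into the protocol $\mathcal{P}_{d,k,n,t}$ preserves $(\varepsilon,\delta)$-DP by Theorem~\ref{thm:main}, with the same choice of $\gamma$ in each of the two regimes of $\varepsilon$.

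Next I would argue unbiasedness and extract a variance bound. As noted in the proof of Theorem~\ref{thm:mse}, $\mathcal{P}_{d,k,n,t}$ is constructed from a Bernoulli-rounded fixed-point encoding and a {\tt DeBias} step, both of which are unbiased. Multiplying its output by the appropriate scaling (so that the analyzer's coordinate-wise aggregate of the $t$-out-of-$d$ sampled entries becomes an unbiased estimator of $\frac{d}{n}\sum_{i} q(\vec{x}_{i})$) does not change the DP guarantee, and unbiasedness lets me replace $\text{MSE}$ by $\text{Var}$ throughout. The normalization by $(n/d)^{2}$ that defines $\widehat{\textnormal{MSE}}$ in Section~\ref{sec:tightupperbound} is exactly the normalization needed to pass from the sum $\sum_{i} q(\vec{x}_{i})$ to the scaled average $\frac{d}{n}\sum_{i} q(\vec{x}_{i})$, so the squared error of the final estimator equals the $\widehat{\textnormal{MSE}}$ expression supplied by Theorem~\ref{thm:mse}.

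Finally, because the estimator is unbiased its standard deviation is simply the square root of its variance. Taking the square root of the two $\widehat{\textnormal{MSE}}$ formulas in Theorem~\ref{thm:mse} converts each exponent $(2t \mapsto (2t)^{1/2}$, $8t \mapsto (8t)^{1/2}$, $d^{8/3} \mapsto d^{4/3}$, $(\log(1/\delta)\log(2t/\delta))^{2/3} \mapsto (\log(1/\delta)\log(2t/\delta))^{1/3}$, $(1-\gamma)^{2} \mapsto (1-\gamma)$, $n^{5/3} \mapsto n^{5/6}$, $\varepsilon^{4/3} \mapsto \varepsilon^{2/3})$, which matches the stated $\hat{\sigma}$ in both the $\varepsilon < 1$ and $1 \leq \varepsilon < 6$ cases exactly.

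There is no real obstacle here: the heavy lifting (the advanced-composition privacy argument and the optimization over $k$) has already been done in Theorems~\ref{thm:main} and \ref{thm:mse}. The only point requiring mild care is bookkeeping on the normalization, namely checking that the $(n/d)^{2}$ factor built into $\widehat{\textnormal{MSE}}$ is precisely what converts the variance of the raw sum-estimator into the variance of the $\frac{d}{n}\sum_{i} q(\vec{x}_{i})$ estimator, so that after the final square root the constants in the two branches line up with the statement of Corollary~\ref{cor:query}.
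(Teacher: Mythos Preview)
Your proposal is correct and matches the paper's own justification, which is simply the one-line remark preceding the corollary that ``we simply take the square root of the result obtained in Theorem~\ref{thm:mse}.'' You have merely made explicit the preprocessing-by-$q$ step and the $(n/d)^{2}$ normalization bookkeeping that the paper leaves implicit.
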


To summarize, we have produced a new unbiased protocol for the computation of the sum of $n$ real vectors in the Single-Message Shuffle Model with normalized MSE $O_{\varepsilon, \delta} (d^{8/3} t n^{-5/3})$, using advanced composition results from Dwork \emph{et al.}~\cite{dwork}.
Minimizing this bound as a function of $t$ leads us to choose $t=1$, but any choice of $t$ that is small and not dependent on $d$ produces a bound of the same order.
In our experimental study, we determine that the best choice of $t$ in practice is indeed $t=1$.

\subsection{Improved bounds for t=1} \label{sec:betterbounds}
We observe that in the optimal case in which $t=1$, we can tighten the bounds further, as we do not need to invoke the advanced composition results when each user samples only a single coordinate.
This changes the value of $\gamma$ by a factor of $O(\log(1/\delta))$, which propagates through to the expression for the MSE.
That is, we can more simply set $\varepsilon' = \varepsilon$ and $\delta' = \delta$ in the proof of Theorem~\ref{thm:main}.
When $\varepsilon < 1$, the computation is straightforward, with $c \geq \frac{14}{\varepsilon'^2} \log(2t/\delta)$ being chosen as before.
However, when $1 \leq \varepsilon < 6$, a tighter $c \geq \frac{80}{\varepsilon'^2} \log(2t/\delta)$ must be selected, as the condition $\varepsilon' < 1$ no longer holds.

Using $\varepsilon' < 6$, we have:
\[
(1 - \exp \hspace{.5mm} (-\varepsilon'/2)) \ge \left( 1 - \exp \left (-\frac{2}{3\sqrt{15}} \right) \right) \varepsilon' \ge \frac{\varepsilon'}{2\sqrt{10}}.
\]

\noindent Thus, we have:
\begin{align*}
\Pr \!\left[ \frac{\mathsf{N}_{\theta}}{\mathsf{N}_{\phi}}  \geq e^{\varepsilon'} \right]
&\le \exp \dftbig( -\frac{c}{3} (\varepsilon'/2)^2 \dftbig) +
\exp \dftbig( -\frac{c}{2} \dftbig( \frac{\varepsilon'}{2\sqrt{10}} \dftbig) ^2 \dftbig)\\
&\le 2\exp\left(-\frac{80}{2\varepsilon'^2}\frac{\varepsilon'^2}{40} \log(2t/\delta)\right) \leq \delta/t,
\end{align*}

\noindent which yields:
\[
\gamma =
\begin{cases}
\max \!\dftbig\{ \frac{14dk \log(2/\delta)}{(n - 1) \varepsilon^{2}}, \frac{27dk}{(n - 1) \varepsilon} \dftbig\}, & \text{when}\ \varepsilon < 1 \\
\max \!\dftbig\{ \frac{80dk \log(2/\delta)}{(n - 1) \varepsilon^{2}}, \frac{36dk}{11(n - 1) \varepsilon} \dftbig\}, & \text{when}\ 1 \leq \varepsilon < 6. \\
\end{cases}
\]

Note that the above expression for $\gamma$ in the case $\varepsilon < 1$ coincides with the result obtained by Balle \emph{et al.} in the scalar case~\cite{balleprivacyblanket}.
Putting this expression for $\gamma$ in the proof of Theorem~\ref{thm:mse}, with the choice
\[
k = 
\begin{cases}
\min \!\dftbig\{ \!\left( \frac{n \varepsilon^{2}}{28d \log(2/\delta)} \right)^{1/3}, \ \!\left( \frac{n \varepsilon}{54d} \right)^{1/3} \dftbig\}, \\ \qquad \text{when}\ \varepsilon < 1 \\
\min \!\dftbig\{ \!\left( \frac{n \varepsilon^{2}}{160d \log(2/\delta)} \right)^{1/3}, \ \!\left( \frac{11n \varepsilon}{72d} \right)^{1/3} \dftbig\}, \\ \qquad \text{when}\ 1 \leq \varepsilon < 6,
\\
\end{cases}
\]

\noindent causes the upper bound on the normalized MSE to reduce to:
\[
\widehat{\textnormal{MSE}} =
\begin{cases}
\max \!\dftbig\{ \frac{98^{1/3} d^{8/3} \log^{2/3}(2/\delta)}{(1-\gamma)^{2} n^{5/3} \varepsilon^{4/3}}, \frac{18 d^{8/3}}{(1-\gamma)^{2} n^{5/3} (4 \varepsilon)^{2/3}} \dftbig\}, \\ \qquad \text{when}\ \varepsilon < 1 \\
\max \!\dftbig\{ \frac{2 d^{8/3} (20\log(2/\delta))^{2/3}}{(1-\gamma)^{2} n^{5/3} \varepsilon^{4/3}}, \frac{2(9^{2/3}) d^{8/3}}{(1-\gamma)^{2} n^{5/3} (11 \varepsilon)^{2/3}} \dftbig\}, \\ \qquad \text{when}\ 1 \leq \varepsilon < 6. \\
\end{cases}
\]

By updating Corollary \ref{cor:query} in the same way, we can conclude that for the optimal choice $t = 1$, the normalized standard deviation of our unbiased protocol can be further tightened to:
\[
\hat{\sigma} =
\begin{cases}
\max \!\dftbig\{ \frac{98^{1/6} d^{4/3} \log^{1/3}(2/\delta)}{(1-\gamma) n^{5/6} \varepsilon^{2/3}}, \frac{18^{1/2} d^{4/3}}{(1-\gamma) n^{5/6} (4 \varepsilon)^{1/3}} \dftbig\}, \\ \qquad \text{when}\ \varepsilon < 1 \\
\max \!\dftbig\{ \frac{2^{1/2} d^{4/3} (20\log(2/\delta))^{1/3}}{(1-\gamma) n^{5/6} \varepsilon^{2/3}}, \frac{2^{1/2} 9^{1/3} d^{4/3}}{(1-\gamma) n^{5/6} (11 \varepsilon)^{1/3}} \dftbig\}, \\ \qquad \text{when}\ 1 \leq \varepsilon < 6. \\
\end{cases}
\]

\section{Transforming Summation in the Shuffle Model} \label{sec:transform}

In this section we further improve the bound we have obtained for private summation by using an orthonormal transformation.
We make use of the (Discrete) Fourier Transformation of the data, which concentrates information about signals with a particular property into a small number of coefficients.
We follow the outline of Rastogi and Nath~\cite{rastogi}, who follow a similar approach for time series data in the centralized DP model.
Our goal is to seek to improve the normalized MSE of our protocol, by concentrating on a smaller number of coefficients in the Fourier domain.

Recall that we are addressing the problem of computing the sum of $n$ real $d$-dimensional vectors, each of the form
\[
\vec{x}_{i} = (x_{i}^{(1)}, \dots, x_{i}^{(d)}) \in [0,1]^{d},
\] 
in the Single-Message Shuffle Model.
In Section~\ref{sec:protocol}, we formulated a new protocol $\mathcal{P}_{d, k, n, t}$, which adds random noise to each vector $\vec{x}_{i}$ in turn, ensuring that the computation of the (approximate) sum $\vec{z} = (z^{(1)}, \dots, z^{(d)}) \in [0,1]^{d}$ of these vectors is $(\varepsilon, \delta)$-DP.
In particular, a randomized response mechanism was applied to each of the $t$ uniformly sampled coordinates from the $d$ available choices.
In Section~\ref{sec:betterbounds}, we obtained our tight bound $O_{\varepsilon, \delta} (d^{8/3} n^{-5/3})$ for the normalized MSE of our protocol.

If we are able to compress each of the vectors $\vec{x}_{i}$ to a highly representative $m$-dimensional vector before applying $\mathcal{P}_{d, k, n, t}$, it will be possible to improve this bound to $O_{\varepsilon, \delta} (m^{8/3} n^{-5/3})$.
Our method involves applying the \emph{Discrete Fourier Transform} (DFT) to the $d$-dimensional vector $\vec{x}_{i}$ to obtain another $d$-dimensional vector.
The key to this approach is the assumption that the DFT captures the bulk of the information about the vector in a prefix of the coefficients.
While this is not true in general for arbitrary signals, such as ones where each component is chosen independently and uniformly at random, it has been observed to hold for many naturally occurring scenarios, such as time-series of human and natural activity, audio signals, and so on~\cite{selesnick}.
When this assumption holds, it is possible to eliminate most of the coefficients of the transformed vector whilst keeping the vast majority of the information about the data.
In particular, this holds true for the ECG Heartbeat Categorization Dataset that we use in our experimental study, as we see later.
Absent the above property, eliminating coefficients in this way would not necessarily result in most of the information being retained.

By keeping only the first $m$ Fourier coefficients of $\text{DFT}(\vec{x}_{i})$, where $m \ll d$, and then applying $\mathcal{P}_{d, k, n, t}$ to $m$ coefficients instead of $d$, we can ensure that the accuracy lost from the $d - m$ eliminated coordinates is much smaller than the improvement in the normalized MSE bound.
This close variant of $\mathcal{P}_{d, k, n, t}$ will be expressed as an algorithm $\mathcal{F}_{d, k, m, n, t}$ in Section~\ref{sec:thefsa}.
To motivate this, we first recall how to approximate a $d$-dimensional vector using the DFT and its inverse.

\subsection{Discrete Fourier Transform} \label{sec:dft}

The DFT of a $d$-dimensional vector $\vec{x}_{i} = (x_{i}^{(1)}, \dots, x_{i}^{(d)}) \\ \in [0,1]^{d}$ is defined to be the linear transform giving another $d$-dimensional vector $\text{DFT}(\vec{x}_{i}) = (\text{DFT} (x_{i}^{(1)}), \dots, \text{DFT} (x_{i}^{(d)})) \\ \in [0,1]^{d}$, where each $\text{DFT} (x_{i}^{(j)})$ coefficient is defined as:
\[
\text{DFT} (x_{i}^{(j)}) = \frac{1}{\sqrt{d}}\sum_{k = 1}^{d} x_{i}^{(k)} e^{\frac{2 \pi \sqrt{-1}}{d} jk}.
\]

The Inverse DFT of $\vec{x}_{i}$ is the corresponding inverse linear transform to the DFT.
It is represented as $\text{IDFT}(\vec{x}_{i}) = (\text{IDFT} (x_{i}^{(1)}), \dots, \text{IDFT} (x_{i}^{(d)})) \in [0,1]^{d}$, where each $\text{IDFT} (x_{i}^{(j)})$ is defined as:
\[
\text{IDFT} (x_{i}^{(j)}) = \frac{1}{\sqrt{d}} \sum_{k = 1}^{d} x_{i}^{(k)} e^{- \frac{2 \pi \sqrt{-1}}{d} jk}.
\]

Although the Fourier Transform gives complex results in general, the DFT can be represented by $d$ real numbers for real input data of dimension $d$.
Importantly, these real numbers can be bounded.
Given $\vec{y}$, we have $\|\vec{y}\|_2 = \|\text{DFT}(\vec{y})\|_2$ (Plancherel Theorem~\cite{herb}).
So if we ensure that our vectors are normalized so that $\|\vec{x}_i\|_1 = 1$, then $\|\text{DFT}(\vec{x}_i)\|_2 = \|\vec{x}_i\|_2 \leq \|\vec{x}_i\|_1$.
This in turn means that every $|\text{DFT}(\vec{x}_i^{(j)})|\leq 1$, i.e., the individual Fourier coefficient values are in the range $-1$ to $+1$.
An additional property is that the first Fourier coefficient gives the so-called `DC component', $\text{DFT}(\vec{x}_i^{(1)}) = \sum_{j=1}^{d}\vec{x}_i^{(j)}$, which, if $\vec{x}$ is a normalized non-negative vector, we can assume to be equal to 1.

We have established that in our case, each transformed vector $\text{DFT}(\vec{x}_{i})$ contains most of the information from the input.
So we can choose a small number $m \ll d$ such that only the first $m$ Fourier coefficients of the vector returned by $\text{DFT}(\vec{x}_{i})$ are kept.
This leaves an $m$-dimensional summary:
\[
\text{DFT}^{m} (\vec{x}_{i}) = (\text{DFT} (x_{i}^{(1)}), \dots, \text{DFT} (x_{i}^{(m)})) \in [-1,1]^{m}.
\]
We retrieve a version of the original data by `padding' the summary, by appending $d - m$ zeros to $\text{DFT}^{m} (\vec{x}_{i})$, denoted by $\text{PAD}^{d}$, then performing the inverse transform:
\[
\vec{x}'_{i} = (x_{i}^{(1)\prime}, \dots, x_{i}^{(d)\prime}) = \text{IDFT} (\text{PAD}^{d} (\text{DFT}^{m} (\vec{x}_{i}))).
\]
The accuracy of this approximation is calculated via the \emph{reconstruction error} of each coordinate:
\[
\text{RE}_{j}^{m} (\vec{x}_{i}) = \dftbig( x_{i}^{(j)\prime} - x_{i}^{(j)} \dftbig) ^{2} = \sum_{j=m+1}^d \text{DFT}(x_i^{(j)})^2.
\]

\subsection{Fourier Summation Algorithm} \label{sec:thefsa}

Algorithm~\ref{alg:thealgo} describes $\mathcal{F}_{d, k, m, n, t}$, an application of the approximation method from Section~\ref{sec:dft} to the private summation of vector-valued messages.
After the first $m$ Fourier coefficients in the DFT of each $\vec{x}_{i}$ are computed, we apply our protocol $\mathcal{P}_{d, k, n, t}$ from Section~\ref{sec:protocol} to each $m$-dimensional vector, where the analyzer returns a debiased $m$-dimensional vector representing the mean of the aggregated outputs from each user.
Note that in this algorithm each user randomizes $t$ uniformly sampled coordinates from their transformed vector, so their sample is likely to be much more representative of the original vector.
To complete the algorithm, the returned $m$-dimensional vector is `padded' with $d - m$ zeros and then transformed back to the original domain.
The output of $\mathcal{F}_{d, k, m, n, t}$ is a close approximation to the output of $\mathcal{P}_{d, k, n, t}$, differing only in the reconstruction errors of each returned coordinate.

There is one discrepancy to address: our basic vector summation protocol requires each coordinate to be in the range $[0,1]$, while the DFT values may be in the range $[-1,1]$.
There are two natural approaches.
We could extend the protocol to handle negative values, by expanding the the histogram to $2k$ buckets, $k$ for positive values and $k$ for the negative ones.
Or, we could remap the Fourier coefficients by a linear transformation (adding 1 and dividing the result by 2) before putting them into the protocol, then applying the inverse of this transform on the decoded result.
We apply the latter approach in our experiments.

The privacy of this procedure follows immediately from the discussion in Section~\ref{sec:privanalysis}.
The DFT of a vector of dimension $d$ produces a new vector of the same dimension, whose privacy is protected by the shuffle-based protocol.
The inversion of the DFT on the reconstructed vector can be considered as post-processing, and does not affect the privacy properties of the procedure.

\begin{algorithm}[t]
\DontPrintSemicolon
\SetArgSty{textnormal}
\smallskip
\KwPP{$k$, $m$, $t$, dimension $d$, and number of parties $n$}\\
\KwIn{$\vec{D} = (\vec{x}_{1}, \dots, \vec{x}_{n}) \in ([0,1]^{d})^{n}$}
\smallskip
Compute $\vec{D}^{*} = (\text{DFT}^{m} (\vec{x}_{1}), \dots, \text{DFT}^{m} (\vec{x}_{n})) \in ([0,1]^{m})^{n}$\\
\smallskip
Compute $\vec{z}^{*} = (\mathcal{P}_{d, k, n, t} (\text{DFT}^{m} (\vec{x}_{1})), \dots, \mathcal{P}_{d, k, n, t} (\text{DFT}^{m} (\vec{x}_{n}))) \in [0,1]^{m}$\\
\smallskip
Return $\vec{z}' = \text{IDFT} (\text{PAD}^{d} (\vec{z}^{*})) \in [0,1]^{d}$\\
\smallskip
\KwOut{$\vec{z}' = (z^{(1)\prime}, \dots, z^{(d)\prime}) \in [0,1]^{d}$}
\smallskip
\caption{Fourier Summation $\mathcal{F}_{d, k, m, n, t}$}
\label{alg:thealgo}
\end{algorithm}

\subsection{Analyzing Accuracy} \label{sec:accuracy}

In Section~\ref{sec:betterbounds} we refined the bound obtained from Theorem~\ref{thm:mse} to state that for any $d, n \in \mathbb{N}$, $t = 1$, $\varepsilon < 6$ and $\delta \in \!\left( 0, 1 \right]$, there exists a parameter $k$ such that $\mathcal{P}_{d, k, n, t}$ is $(\varepsilon, \delta)$-DP and:
\begin{align*}
\widehat{\textnormal{MSE}}&(\mathcal{P}_{d, k, n, t}) \\
&=
\begin{cases}
\max \!\dftbig\{ \frac{98^{1/3} d^{8/3} \log^{2/3}(2/\delta)}{(1-\gamma)^{2} n^{5/3} \varepsilon^{4/3}}, \frac{18 d^{8/3}}{(1-\gamma)^{2} n^{5/3} (4 \varepsilon)^{2/3}} \dftbig\}, \\ \qquad \text{when}\ \varepsilon < 1 \\
\max \!\dftbig\{ \frac{2 d^{8/3} (20\log(2/\delta))^{2/3}}{(1-\gamma)^{2} n^{5/3} \varepsilon^{4/3}}, \frac{2(9^{2/3}) d^{8/3}}{(1-\gamma)^{2} n^{5/3} (11 \varepsilon)^{2/3}} \dftbig\}, \\ \qquad \text{when}\ 1 \leq \varepsilon < 6. \\
\end{cases}
\end{align*}

As $\mathcal{F}_{d, k, m, n, t}$ applies $\mathcal{P}_{d, k, n, t}$ on $m$-dimensional vectors, we expect its normalized MSE to be a function of $m$ instead of $d$, plus the reconstruction error for using $m$ instead of $d$ Fourier coefficients.
Note that any $\gamma$ that guarantees $(\varepsilon, \delta)$-DP in $\mathcal{P}_{d, k, n, t}$ will also guarantee $(\varepsilon, \delta)$-DP in $\mathcal{F}_{d, k, m, n, t}$.
Using this information, we calculate the normalized MSE of $\mathcal{F}_{d, k, m, n, t}$ in the following theorem.

\begin{thm} \label{thm:dftmse}
Fix the value of $\gamma$ we found in Theorem~\ref{thm:main} so that $\mathcal{F}_{d, k, m, n, t}$ is $(\varepsilon, \delta)$-DP.
Then, for all $j \in [d]$:
\begin{align*}
\widehat{\textnormal{MSE}}&(\mathcal{F}_{d, k, m, n, t}) \\
&=
\begin{cases}
\max \!\dftbig\{ \frac{98^{1/3} m^{8/3} \log^{2/3}(2/\delta)}{(1-\gamma)^{2} n^{5/3} \varepsilon^{4/3}}, \frac{18 m^{8/3}}{(1-\gamma)^{2} n^{5/3} (4 \varepsilon)^{2/3}} \dftbig\}, \\ \qquad \text{when}\ \varepsilon < 1 \\
\max \!\dftbig\{ \frac{2 m^{8/3} (20\log(2/\delta))^{2/3}}{(1-\gamma)^{2} n^{5/3} \varepsilon^{4/3}}, \frac{2(9^{2/3}) m^{8/3}}{(1-\gamma)^{2} n^{5/3} (11 \varepsilon)^{2/3}} \dftbig\}, \\ \qquad \text{when}\ 1 \leq \varepsilon < 6 \\
\end{cases}
\\
&+ \sum_{j = 1}^{d} \textnormal{RE}_{j}^{m} (\vec{z}).
\end{align*}
\end{thm}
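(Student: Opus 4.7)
The plan is to split the squared error of $\mathcal{F}_{d,k,m,n,t}$ into a randomization component arising from $\mathcal{P}$ acting in the Fourier domain and a deterministic reconstruction component arising from discarding the last $d-m$ Fourier coefficients, then show that these add in expectation. To set things up, let $\vec{u}_i = \text{DFT}^m(\vec{x}_i)$, $\bar{\vec{u}} = \frac{1}{n}\sum_i \vec{u}_i$, $\bar{\vec{x}} = \frac{1}{n}\sum_i \vec{x}_i$, and introduce the noiseless truncation-and-inversion $\tilde{\bar{\vec{x}}} = \text{IDFT}(\text{PAD}^d(\bar{\vec{u}}))$. Writing the algorithm's output as $\vec{z}' = \text{IDFT}(\text{PAD}^d(\vec{z}^*))$, where $\vec{z}^*$ is the output of $\mathcal{P}$ on the $\vec{u}_i$, gives the telescoping identity $\vec{z}' - \bar{\vec{x}} = (\vec{z}' - \tilde{\bar{\vec{x}}}) + (\tilde{\bar{\vec{x}}} - \bar{\vec{x}})$.

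Next, I would take expectations of $\|\vec{z}' - \bar{\vec{x}}\|_2^2$ and expand. The cross term vanishes: $\tilde{\bar{\vec{x}}} - \bar{\vec{x}}$ is a deterministic function of the input, while $\vec{z}' - \tilde{\bar{\vec{x}}} = \text{IDFT}(\text{PAD}^d(\vec{z}^* - \bar{\vec{u}}))$ has zero mean by the unbiasedness of $\mathcal{P}$ established at the start of Section~\ref{sec:tightupperbound}. The deterministic term is exactly the total reconstruction error of $\bar{\vec{x}}$, namely $\sum_{j=1}^d \textnormal{RE}_j^m(\bar{\vec{x}})$, matching the trailing sum in the statement. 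For the randomization term, I would use Plancherel together with the fact that zero-padding preserves $\ell_2$-norm (both invoked in Section~\ref{sec:dft}) to get $\|\vec{z}' - \tilde{\bar{\vec{x}}}\|_2^2 = \|\vec{z}^* - \bar{\vec{u}}\|_2^2$, which is precisely the squared error of $\mathcal{P}_{m,k,n,t}$ on $m$-dimensional Fourier-domain inputs. Applying Theorem~\ref{thm:mse}, refined as in Section~\ref{sec:betterbounds}, with $d$ replaced by $m$ yields the first maximum in the statement.

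The main obstacle is checking the minor mismatch that DFT coefficients lie in $[-1,1]$ rather than $[0,1]$: this is resolved by the linear remap described at the end of Section~\ref{sec:thefsa}, which alters only absorbed constants and preserves the coordinate-wise unbiasedness of the randomized-rounding plus \texttt{DeBias} steps inside $\mathcal{P}$. A second bookkeeping point is that the same $\gamma$ delivers $(\varepsilon,\delta)$-DP here, since the privacy analysis in Section~\ref{sec:privanalysis} depends only on the dimensionality and bounded range of $\mathcal{P}$'s inputs, and the DFT/IDFT wrappers are post-processing; this is exactly the observation already recorded in Section~\ref{sec:thefsa} and is what licences the reuse of the $m$-indexed version of Theorem~\ref{thm:mse} without further proof.
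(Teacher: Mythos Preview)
Your proposal is correct and follows essentially the same route as the paper: both use Plancherel/orthonormality of the DFT to separate the squared error into a perturbation part on the first $m$ Fourier coefficients (handled by the $m$-dimensional instance of Theorem~\ref{thm:mse} as refined in Section~\ref{sec:betterbounds}) and a reconstruction part on the last $d-m$ coefficients. The only cosmetic difference is that the paper splits directly in the Fourier domain, where the two pieces have disjoint support and are automatically orthogonal, whereas you introduce the intermediate point $\tilde{\bar{\vec{x}}}$ and kill the cross term via unbiasedness; either device works, and in fact orthogonality of the Fourier supports already makes the cross term vanish without invoking unbiasedness.
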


\begin{proof}
Let $\vec{z}' = (z^{(1)\prime}, \dots, z^{(d)\prime}) \in [0,1]^{d}$ be the $d$-dimensional vector returned by the $\mathcal{F}_{d, k, m, n, t}$ algorithm.
We can make use of the orthonormality of the Fourier Transform to express the error in reconstruction in terms of the error in the Fourier coefficients:
\begin{align*}
&\widehat{\text{MSE}}(\mathcal{F}_{d, k, m, n, t})
= (\vec{z}' - \vec{z})^2
= \sum_{j=1}^{d} \text{DFT}(z'^{(j)} - z^{(j)})^2\\
&= \sum_{j=1}^{m} \text{DFT}(z'^{(j)} - z^{(j)})^2
+ \sum_{j=m+1}^{d} \text{DFT}(z'^{(j)} - z^{(j)})^2\\
&= \widehat{\text{MSE}}(\mathcal{P}_{d=m, k, n, t}) + \sum_{j=m+1}^{d} \text{DFT}(z^{(j)})^2\\
&=
\begin{cases}
\max \!\dftbig\{ \frac{98^{1/3} m^{8/3} \log^{2/3}(2/\delta)}{(1-\gamma)^{2} n^{5/3} \varepsilon^{4/3}}, \frac{18 m^{8/3}}{(1-\gamma)^{2} n^{5/3} (4 \varepsilon)^{2/3}} \dftbig\}, \\ \qquad \text{when}\ \varepsilon < 1 \\
\max \!\dftbig\{ \frac{2 m^{8/3} (20\log(2/\delta))^{2/3}}{(1-\gamma)^{2} n^{5/3} \varepsilon^{4/3}}, \frac{2(9^{2/3}) m^{8/3}}{(1-\gamma)^{2} n^{5/3} (11 \varepsilon)^{2/3}} \dftbig\}, \\ \qquad \text{when}\ 1 \leq \varepsilon < 6 \\
\end{cases}
\\
&\qquad + \sum_{j = 1}^{d} \text{RE}_{j}^{m} (\vec{z}).
\end{align*}

\begin{align*}
&\widehat{\text{MSE}}(\mathcal{F}_{d, k, m, n, t})
= (\vec{z}' - \vec{z})^{2} \leq (\mu - \vec{z})^{2} + (\vec{z}' - \mu)^{2} \\
&\le \sum_{j = 1}^{d} (z^{(j)\prime} - z^{(j)})^{2} + \widehat{\text{MSE}}(\mathcal{P}_{d=m, k, n, t}) \\
&= \sum_{j = 1}^{d} \text{RE}_{j}^{m} (\vec{z}) \\
&\qquad + 
\begin{cases}
\max \!\dftbig\{ \frac{98^{1/3} m^{8/3} \log^{2/3}(2/\delta)}{(1-\gamma)^{2} n^{5/3} \varepsilon^{4/3}}, \frac{18 m^{8/3}}{(1-\gamma)^{2} n^{5/3} (4 \varepsilon)^{2/3}} \dftbig\}, \\ \qquad \text{when}\ \varepsilon < 1 \\
\max \!\dftbig\{ \frac{2 m^{8/3} (20\log(2/\delta))^{2/3}}{(1-\gamma)^{2} n^{5/3} \varepsilon^{4/3}}, \frac{2(9^{2/3}) m^{8/3}}{(1-\gamma)^{2} n^{5/3} (11 \varepsilon)^{2/3}} \dftbig\}, \\ \qquad \text{when}\ 1 \leq \varepsilon < 6. \\
\end{cases}
\end{align*}
\end{proof}

We also obtain a tighter bound for the analogous corollary to Theorem~\ref{thm:mse}.

\begin{cor} \label{cor:querymse}
For every statistical query $q: \mathcal{X} \mapsto [0,1]^{d}$, $d, n \in \mathbb{N}$, $t = 1$, $\varepsilon < 6$ and $\delta \in \!\left( 0,1 \right]$, there is an $(\varepsilon, \delta)$-DP $n$-party unbiased protocol for estimating $\frac{1}{n} \sum_{i} q(\vec{x}_{i})$ in the Single-Message Shuffle Model with standard deviation $\hat{\sigma}(\mathcal{F}_{d, k, m, n, t})$
\begin{align*}
&=
\begin{cases}
\max \!\dftbig\{ \frac{98^{1/6} m^{4/3} \log^{1/3}(2/\delta)}{(1-\gamma) n^{5/6} \varepsilon^{2/3}}, \frac{18^{1/2} m^{4/3}}{(1-\gamma) n^{5/6} (4 \varepsilon)^{1/3}} \dftbig\}, \\ \qquad \text{when}\ \varepsilon < 1 \\
\max \!\dftbig\{ \frac{2^{1/2} m^{4/3} (20\log(2/\delta))^{1/3}}{(1-\gamma) n^{5/6} \varepsilon^{2/3}}, \frac{2^{1/2} 9^{1/3} m^{4/3}}{(1-\gamma) n^{5/6} (11 \varepsilon)^{1/3}} \dftbig\}, \\ \qquad \text{when}\ 1 \leq \varepsilon < 6 \\
\end{cases}
\\
&\qquad + \sqrt{\sum_{j = 1}^{d} \textnormal{RE}_{j}^{m} (\vec{z})}.
\end{align*}
\end{cor}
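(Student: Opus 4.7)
The plan is to derive Corollary~\ref{cor:querymse} directly from Theorem~\ref{thm:dftmse} using the same reduction that was applied earlier to obtain Corollary~\ref{cor:query} from Theorem~\ref{thm:mse}. Since the protocol $\mathcal{F}_{d,k,m,n,t}$ is built from $\mathcal{P}_{d,k,n,t}$ (which is unbiased by the argument in Section~\ref{sec:tightupperbound}) followed only by the invertible linear operations of padding and applying $\text{IDFT}$, the overall estimator is unbiased. Consequently the standard deviation $\hat{\sigma}(\mathcal{F}_{d,k,m,n,t})$ equals $\sqrt{\widehat{\text{MSE}}(\mathcal{F}_{d,k,m,n,t})}$, so Theorem~\ref{thm:dftmse} supplies everything we need in squared form.

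First I would write Theorem~\ref{thm:dftmse} schematically as $\widehat{\text{MSE}}(\mathcal{F}_{d,k,m,n,t}) = A(\varepsilon,\delta,m,n) + \sum_{j=1}^{d}\text{RE}_j^m(\vec{z})$, where $A$ is the max-of-two-terms piece. Applying subadditivity of the square root, $\sqrt{A+B}\le\sqrt{A}+\sqrt{B}$, yields $\hat{\sigma}(\mathcal{F}_{d,k,m,n,t}) \le \sqrt{A} + \sqrt{\sum_j \text{RE}_j^m(\vec{z})}$, which immediately produces the reconstruction-error term of the claimed expression.

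Next I would simplify $\sqrt{A}$. Using the identity $\sqrt{\max\{u,v\}} = \max\{\sqrt{u},\sqrt{v}\}$ for non-negative $u,v$, the max outside is preserved, and inside each branch one just halves every exponent. Thus $m^{8/3} \mapsto m^{4/3}$, $n^{5/3}\mapsto n^{5/6}$, $\varepsilon^{4/3}\mapsto \varepsilon^{2/3}$ (respectively $\varepsilon^{2/3}\mapsto\varepsilon^{1/3}$ inside the second branch), $\log^{2/3}(2/\delta)\mapsto\log^{1/3}(2/\delta)$, the $(1-\gamma)^2$ denominator becomes $(1-\gamma)$, and the numerical constants $98^{1/3}$, $18$, $2(20\log(2/\delta))^{2/3}$, $2\cdot 9^{2/3}$ become $98^{1/6}$, $18^{1/2}$, $2^{1/2}(20\log(2/\delta))^{1/3}$, $2^{1/2}\cdot 9^{1/3}$. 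This matches exactly the maximum expression stated in Corollary~\ref{cor:querymse}, and the unbiasedness noted above together with the privacy claim inherited from $\mathcal{P}_{d,k,n,t}$ (observed in Section~\ref{sec:thefsa} as a post-processing consequence) completes the derivation.

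I do not expect a real obstacle here; the argument is essentially bookkeeping on Theorem~\ref{thm:dftmse}. The one step that deserves care is the use of subadditivity of the square root: it is a bound rather than an equality, so the statement of the corollary should be read as an upper bound on $\hat{\sigma}$ even though it is written with an equals sign, mirroring the convention already used in Corollary~\ref{cor:query}. Confirming that this convention is consistent with the statement in Theorem~\ref{thm:mse}, and double-checking the arithmetic on the numerical constants after taking square roots, are the only points that need explicit verification.
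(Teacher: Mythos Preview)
Your proposal is correct and follows exactly the approach the paper implicitly takes: the corollary is stated without proof as the direct analogue of Corollary~\ref{cor:query}, obtained by taking square roots of the $\widehat{\textnormal{MSE}}$ bound in Theorem~\ref{thm:dftmse}, with the reconstruction term split off via $\sqrt{A+B}\le\sqrt{A}+\sqrt{B}$ and each exponent halved inside the $\max$. Your caveat about the equals sign really denoting an upper bound, and your observation that the ``unbiased'' label is inherited from $\mathcal{P}_{d,k,n,t}$ even though the Fourier truncation introduces a deterministic reconstruction term, are both apt and consistent with the paper's own conventions.
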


\begin{figure*}[p!]
\centering
\subfloat[Experimental error by number of coordinates $t$ retained]{\includegraphics[width=0.4\linewidth]{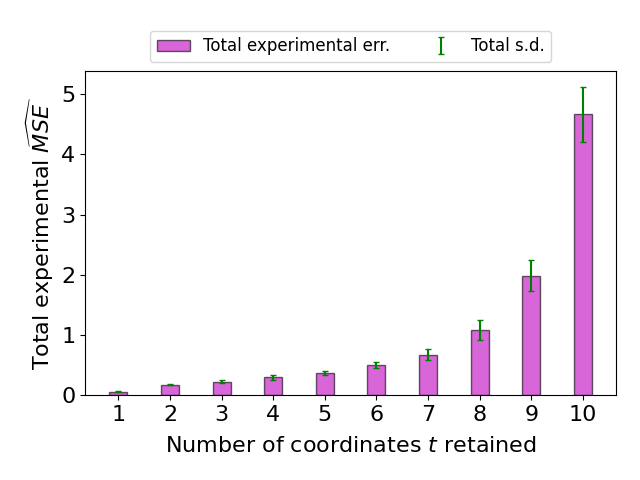}}%
\qquad\qquad
\subfloat[Experimental error by number of buckets $k$ used]{\includegraphics[width=0.4\linewidth]{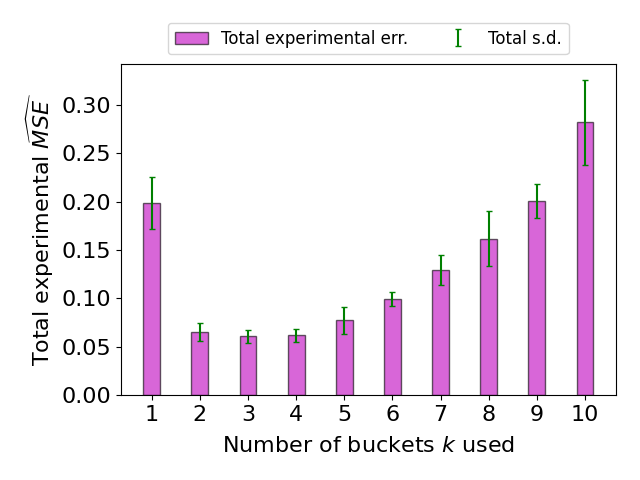}}%
\qquad\qquad
\subfloat[Experimental error by vector dimension $d$]{\includegraphics[width=0.4\linewidth]{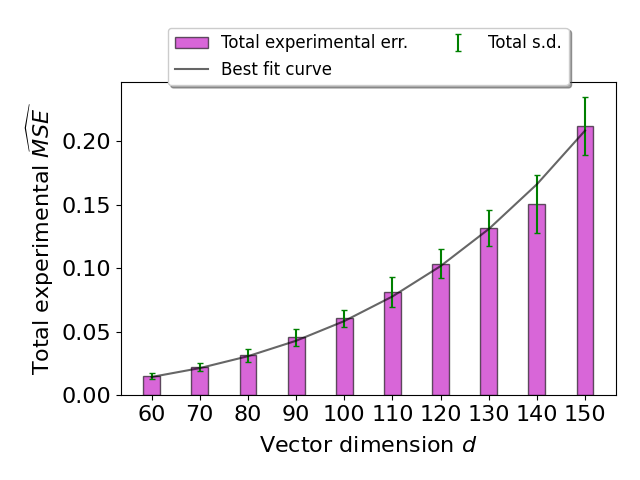}}%
\qquad\qquad
\subfloat[Experimental error by value of $\varepsilon$ where $0.5 \leq \varepsilon < 1$]{\includegraphics[width=0.4\linewidth]{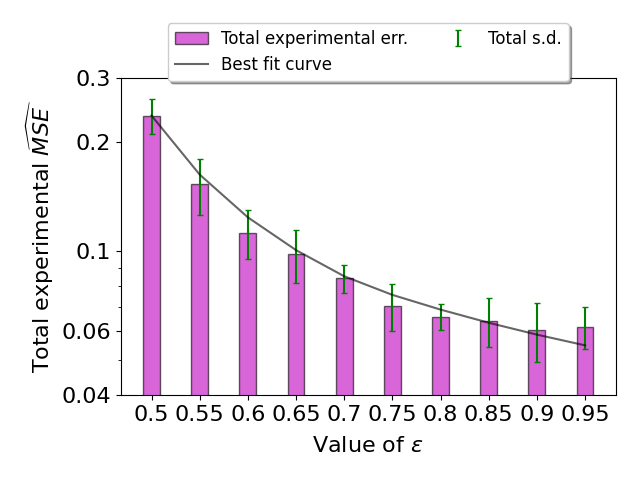}}%
\qquad\qquad
\subfloat[Experimental error by value of $\varepsilon$ where $1 \leq \varepsilon < 6$]{\includegraphics[width=0.4\linewidth]{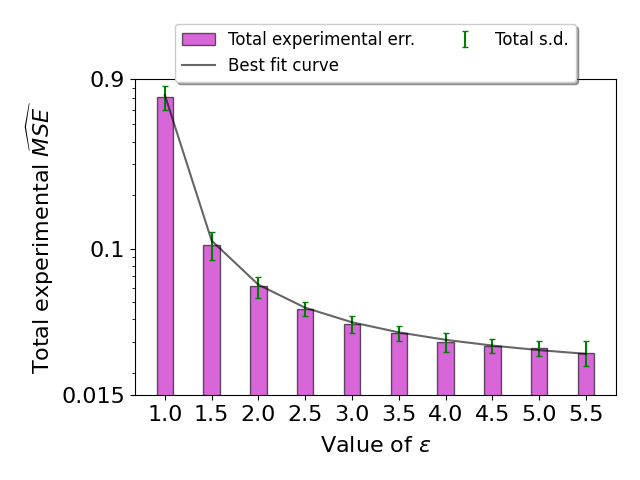}}%
\qquad\qquad
\subfloat[Experimental error by number of vectors $n$ used]{\includegraphics[width=0.4\linewidth]{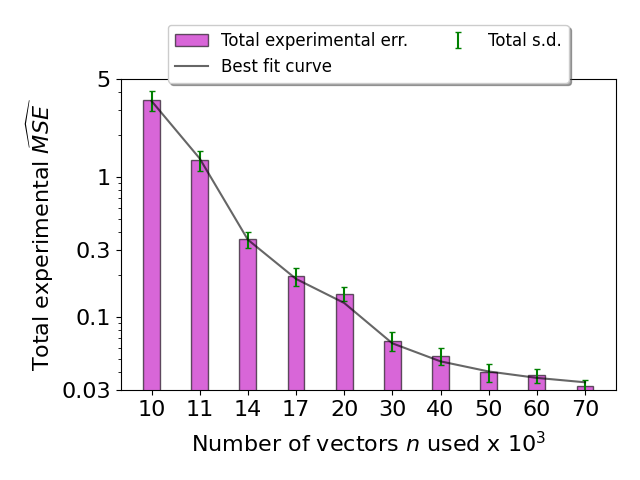}}\\
\caption{\small\label{fig:basic} Bar charts confirming that the choices $t = 1$ in (a) and $k = 3$ in (b) minimize the total experimental $\widehat{\textnormal{MSE}}$, and that best fit curves confirm the dependencies $d^{8/3}$ in (c), $\varepsilon^{-4/3}$ in (d) and (e), and $n^{-5/3}$ in (f) for the ECG Heartbeat Categorization Dataset in the non-Fourier case.}
\end{figure*}

\begin{figure*}[thb]
\centering
\subfloat[Experimental error by number of coordinates $t$ retained]{\includegraphics[width=0.4\linewidth]{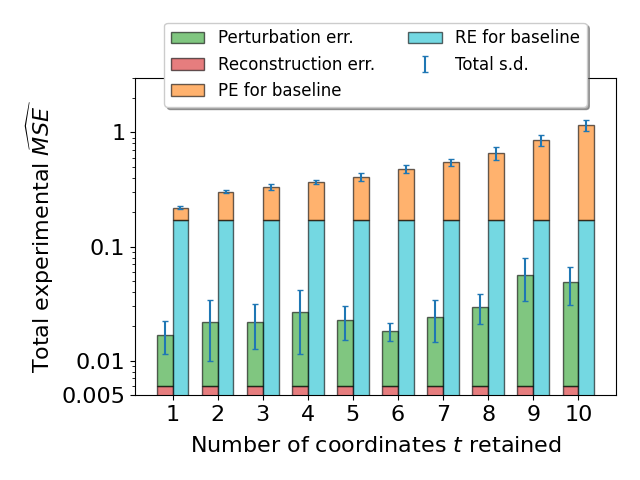}}%
\qquad\qquad
\subfloat[Experimental error by number of buckets $k$ used]{\includegraphics[width=0.4\linewidth]{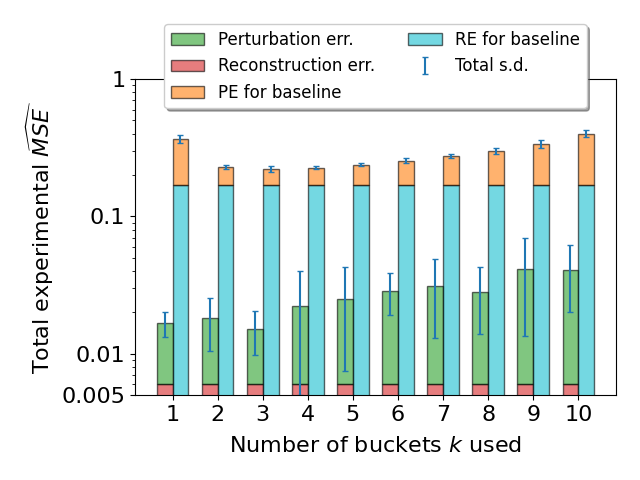}}\\
\caption{\small\label{fig:dfttk} Bar charts confirming that the choices $t = 1$ in (a) and $k = 3$ in (b) minimize the total experimental $\widehat{\textnormal{MSE}}$ for the ECG Heartbeat Categorization Dataset in the Fourier case.
The first bar originates from the authors' FSA, and the second an otherwise identical baseline case with the DFT removed.}
\end{figure*}

To summarize, we have improved the normalized MSE of our new unbiased protocol $\mathcal{P}_{d, k, n, t}$ for the computation of the sum of $n$ real vectors in the Single-Message Shuffle Model to $O_{\varepsilon, \delta} (m^{8/3} n^{-5/3})$, where $m$ can be much smaller than $d$, by using the DFT to compress each of the vectors to be $m$-dimensional, but retain most of their data.

To choose the right $m$, we need to find a good balance between the terms in Theorem~\ref{thm:dftmse}.
If $m$ is too big, the perturbation error $O (m^{8/3} n^{-5/3})$ gives the performance of $\mathcal{P}_{d, k, n, t}$, while if $m$ is too small the reconstruction error $\sum_{j = 1}^{d} \text{RE}_{j}^{m} (\vec{z})$ becomes too big.

If we compare the result of Theorem~\ref{thm:dftmse} with the refined version of Theorem~\ref{thm:mse}, we can see that the dependence on $\varepsilon$ and $n$ are the same.
However, there is a dependence on $m^{8/3}$ in the former, replaced by a dependence on $d^{8/3}$ in the latter, where $m$ is chosen to be smaller than $d$, and could be much smaller.
This vast improvement in the dependence of the dimension is counteracted by the reconstruction error in the Fourier approach, which will not be too large as long as $m$ is set appropriately.
To find the optimal value for $m$ for $\mathcal{F}_{d, k, m, n, t}$, we will compare these two theorems numerically, using a realistic dataset to calculate the dependencies and the reconstruction error.

\section{Experimental Evaluation} \label{sec:eeval}

In this section we present and compare the bounds generated by applying Algorithms~\ref{alg:fixedpoint}, \ref{alg:analyzer} and \ref{alg:thealgo} to an ECG Heartbeat Categorization Dataset in Python.
This publicly available dataset can be found at \url{https://www.kaggle.com/shayanfazeli/heartbeat}, and our Python code for all experiments is available at \url{https://github.com/mary-python/dft/blob/master/shuffle}.
Firstly, we analyse the effect of changing one key parameter at a time, whilst the others remain the same.
Our default settings are vector dimension $d = 100$, rounding parameter $k = 3$, number of users $n = 50000$, number of sampled coordinates $t = 1$, and differential privacy parameters $\varepsilon = 0.95$ and $\delta = 0.5$.
The ranges of all the above parameters have been adjusted to best display the dependencies, whilst simultaneously ensuring that the parameter $\gamma$ of the randomized response mechanism is always within its permitted range of $[0,1]$.

In the later experiments, where we explore the relationship between each of $\varepsilon$ and $n$ on the perturbation error in the Fourier case, it is useful to simultaneously explore a range of (Fourier) coefficients $m$ from $5$ to $95$ to see the effect of this change on the magnitude of the perturbation error.
To emphasize the benefit of using our new Fourier Summation Algorithm (FSA) on the experimental errors, we also implement an almost identical baseline alternative.
In our baseline case, we select our $m$ coefficients as in the FSA, but we do not apply the DFT, or indeed the Inverse DFT to generate the output vector from the padded vector.
All other steps, including the selection of $t$ coordinates from our $m$ coefficients, the linear transform in the original space between the ranges $[-1,1]$ and $[0,1]$, the rounding of the coordinates and the randomized response step, still take place.

\begin{figure*}[thb]
\centering
\subfloat[Perturbation error for $0.5 \leq \varepsilon < 1$ and $m = 5$]{\includegraphics[width=0.4\linewidth]{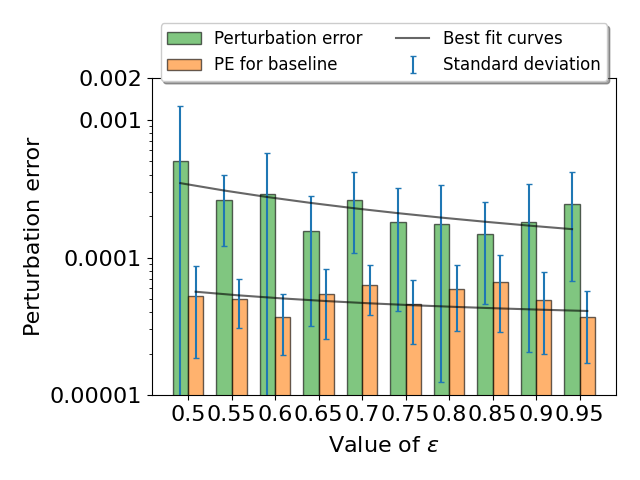}}%
\qquad\qquad
\subfloat[Perturbation error for $1 \leq \varepsilon < 6$ and $m = 5$]{\includegraphics[width=0.4\linewidth]{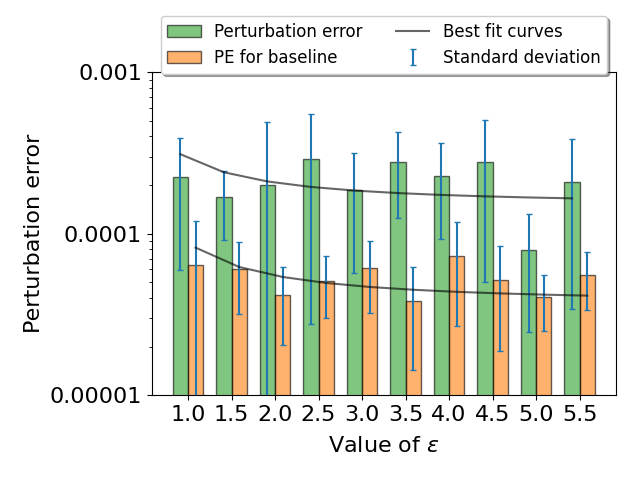}}%
\qquad\qquad
\subfloat[Perturbation error for $0.5 \leq \varepsilon < 1$ and $m = 95$]{\includegraphics[width=0.4\linewidth]{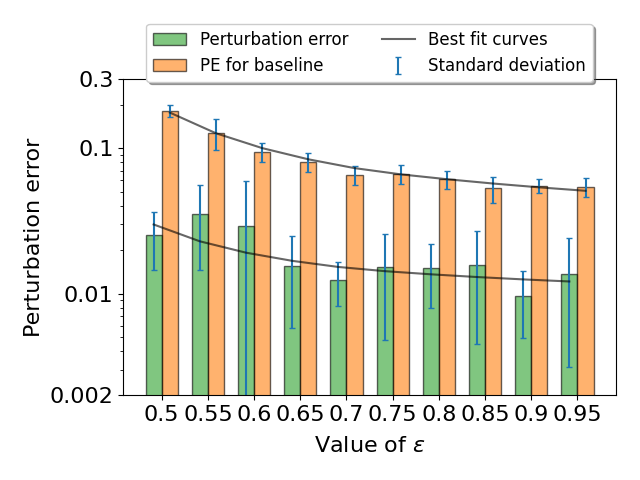}}%
\qquad\qquad
\subfloat[Perturbation error for $1 \leq \varepsilon < 6$ and $m = 95$]{\includegraphics[width=0.4\linewidth]{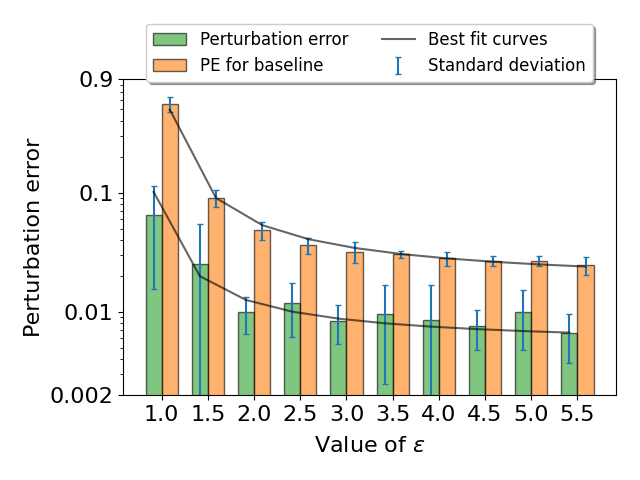}}\\
\caption{\small\label{fig:dfteps} Double bar charts displaying the effect of changing $\varepsilon$ for $m = 5$ and $m = 95$ on perturbation error.
Best fit curves confirm the dependency $\varepsilon^{-4/3}$ from Theorem~\ref{thm:dftmse}.
The first bar originates from the authors' FSA, and the second an otherwise identical baseline case with the DFT removed.}
\end{figure*}

\subsection{Results for Basic Protocol}

In the non-Fourier case (Algorithms~\ref{alg:fixedpoint} and \ref{alg:analyzer}), we first confirm that the choice of $t = 1$ is optimal, as predicted by the results of Section~\ref{sec:betterbounds}.
Indeed, Fig.~\ref{fig:basic} (a) shows that the total experimental $\widehat{\textnormal{MSE}}$ for the ECG Heartbeat Categorization Dataset is significantly smaller when $t = 1$, compared to any other small value of $t$, and so we adopt this setting in all further experiments.

Similarly, Fig.~\ref{fig:basic} (b) suggests that the total experimental $\widehat{\textnormal{MSE}}$ is lowest when $k = 3$, which is sufficiently close to the choice of $k$ selected in the proof of Theorem~\ref{thm:mse}, with all other default parameter values substituted in.
Observe that the absolute value of the observed MSE is below 0.3 in this case, meaning that the vector is reconstructed to a high degree of accuracy, sufficient for many applications.

\begin{figure*}[p!]
\centering
\subfloat[Perturbation error by value of $n$ when $m = 5$]{\includegraphics[width=0.4\linewidth]{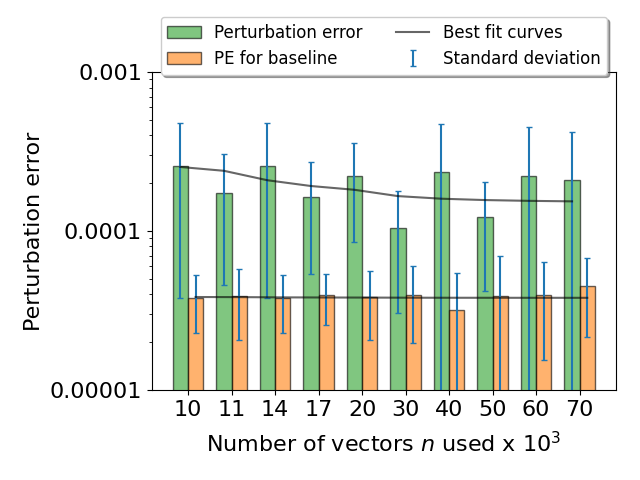}}%
\qquad\qquad
\subfloat[Perturbation error by value of $n$ when $m = 20$]{\includegraphics[width=0.4\linewidth]{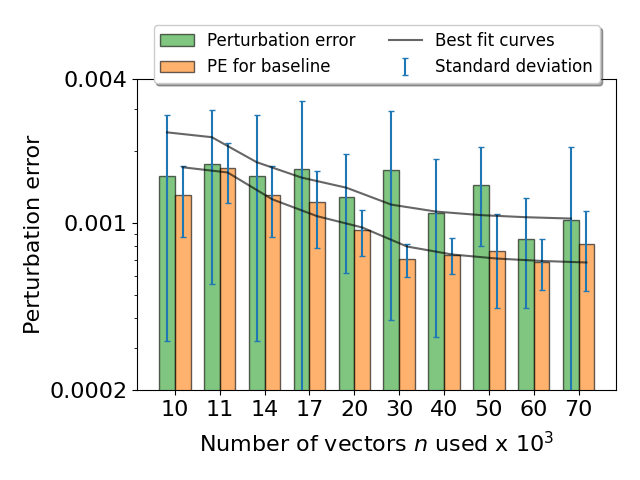}}%
\qquad\qquad
\subfloat[Perturbation error by value of $n$ when $m = 40$]{\includegraphics[width=0.4\linewidth]{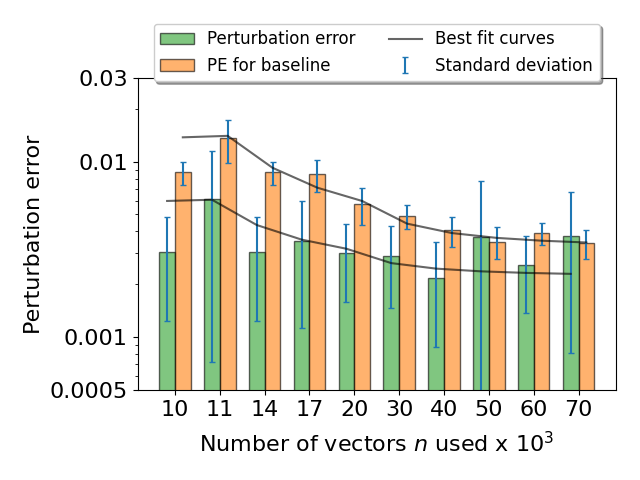}}%
\qquad\qquad
\subfloat[Perturbation error by value of $n$ when $m = 55$]{\includegraphics[width=0.4\linewidth]{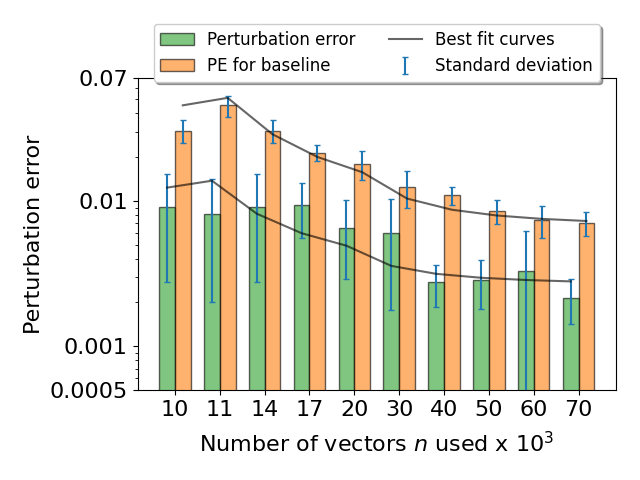}}%
\qquad\qquad
\subfloat[Perturbation error by value of $n$ when $m = 75$]{\includegraphics[width=0.4\linewidth]{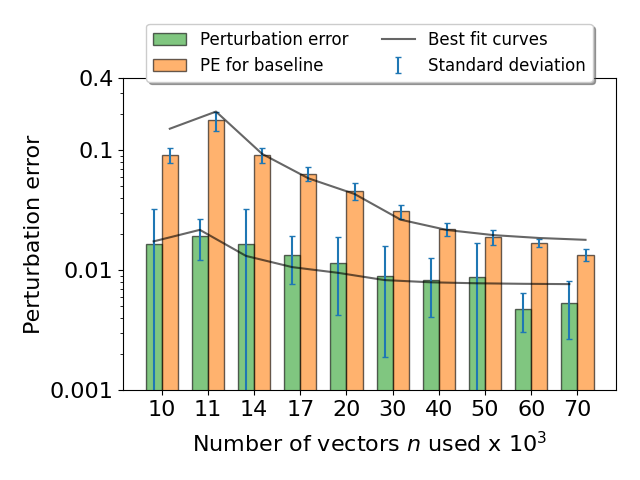}}%
\qquad\qquad
\subfloat[Perturbation error by value of $n$ when $m = 95$]{\includegraphics[width=0.4\linewidth]{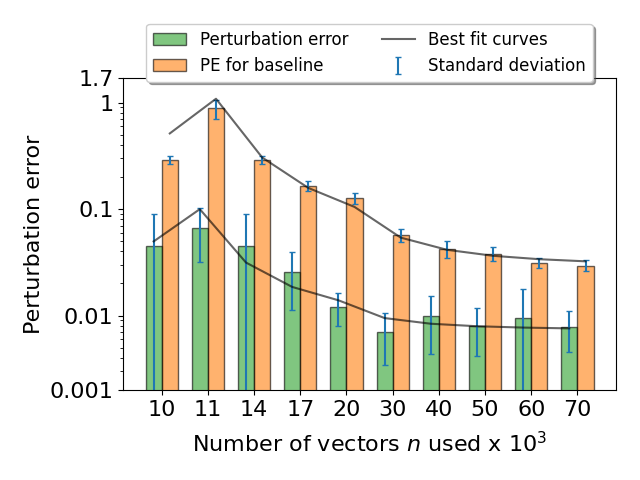}}\\
\caption{\small\label{fig:dftn} Double bar charts displaying the effect of changing $n$ for a range of values of $m$ on perturbation error.
Best fit curves confirm the dependency $n^{-5/3}$ from Theorem~\ref{thm:dftmse}.
The first bar originates from the authors' FSA, and the second an otherwise identical baseline case with the DFT removed.}
\end{figure*}

Next, we verify the bounds of $d^{8/3}$, $\varepsilon^{-4/3}$ and $n^{-5/3}$ from Theorem~\ref{thm:mse}.
Fig.~\ref{fig:basic} (c) is plotted with a best fit curve with equation a multiple of $d^{8/3}$, exactly as desired.
Unsurprisingly, the MSE increases as $d$ goes up according to this superlinear dependence.

Meanwhile, in Fig.~\ref{fig:basic} (d) and (e), we verify the dependency $\varepsilon^{-4/3}$ in the two ranges $\varepsilon < 1$ and $1 \leq \varepsilon < 6$.
The behavior for $\varepsilon < 1$ is quite smooth, but becomes more variable for larger $\varepsilon$ values.

A consequence of the way in which we ensure the privacy bounds are met for the range $1 \leq \varepsilon < 6$ is that 
the resulting experimental $\widehat{\textnormal{MSE}}$ in Fig.~\ref{fig:basic} (e) exceeds that for $\varepsilon=0.95$ in Fig.~\ref{fig:basic}~(d).
A tighter bound would be possible by separately considering these values of $\varepsilon$ when analyzing the term 
$1 - \exp(-\varepsilon'/2)$ (Section~\ref{sec:betterbounds}).
In the interests of brevity and not further overcomplicating the statement of the theoretical bounds, we omit this tightening.
A simpler fix is to replace $1 \leq \varepsilon \leq 1.5$ with $\varepsilon = 0.95$, to obtain both an improved accuracy and a stronger error guarantee.

We now look at Fig.~\ref{fig:basic} (f), which fits a curve dependent on $n^{-7/6}$, sufficiently close to the required result.
We see the benefit of increasing $n$: as $n$ increases by a factor of 10 across the plot, the error decreases by more than two orders of magnitude.

\begin{figure*}[thb]
\centering
\subfloat[Relationship between experimental errors for the ECG Heartbeat Categorization Dataset]{\includegraphics[width=0.4\linewidth]{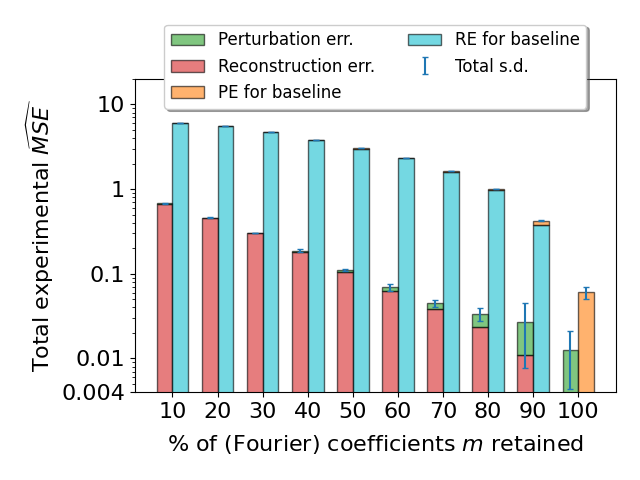}}%
\qquad\qquad
\subfloat[Relationship between experimental errors for a synthetic dataset created in Python]{\includegraphics[width=0.4\linewidth]{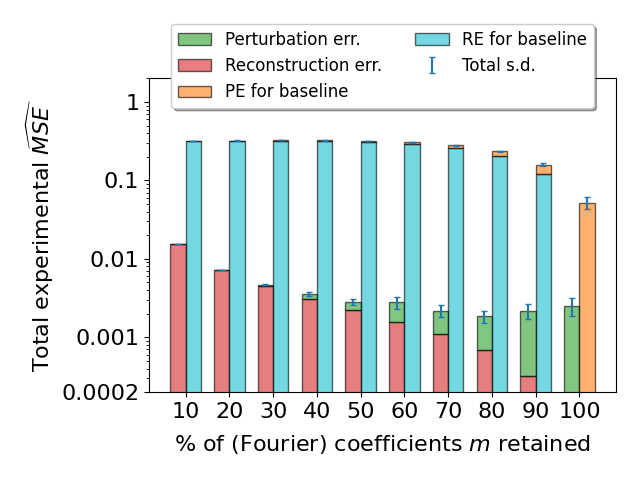}}%
\qquad\qquad
\subfloat[Perturbation error for the ECG Heartbeat Categorization Dataset, with best fit curve confirming $m^{8/3}$ dependency]{\includegraphics[width=0.4\linewidth]{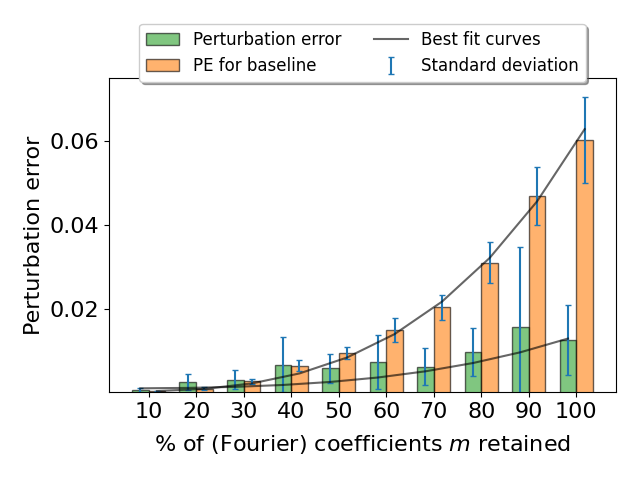}}%
\qquad\qquad
\subfloat[Perturbation error for a synthetic dataset created in Python, with best fit curve confirming $m^{8/3}$ dependency]{\includegraphics[width=0.4\linewidth]{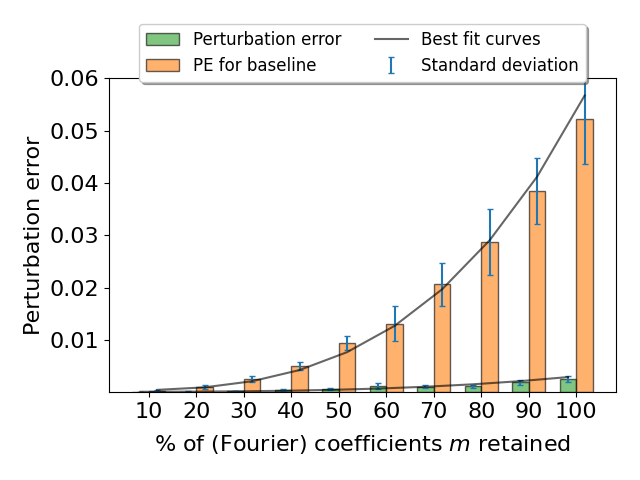}}\\
\caption{\small\label{fig:heartsynth} Double bar charts comparing the relationship between the perturbation and reconstruction errors in the Fourier case for (a) the ECG Heartbeat Categorization Dataset and (b) a synthetic dataset created in Python.
Perturbation error is isolated in (c) and (d) so that the $m^{8/3}$ dependency can be checked, and for ease of comparison.
The first bar originates from the authors' FSA, and the second an otherwise identical baseline case with the DFT removed.}
\end{figure*}

\subsection{Results for Fourier-based Protocol}

In the Fourier case (Algorithm \ref{alg:thealgo}), we used the packages \emph{`rfft'} and \emph{`irfft'} from SciPy's Fast Fourier Transform (FFT) module for the DFT and IDFT steps, which provided the most efficient computation with a real-valued output.
We compare the results of using the DFT to a baseline approach, in order to understand why the Fourier transform is well-suited to reducing the number of coefficients.
Our simple-minded baseline is to try to apply the same approach of dropping coordinates, but without the use of the Fourier transformation.
That is, we only consider the first $m$ coordinates of the input vector to apply the method of Section~\ref{sec:basic} to.
Our experiments demonstrate that this effort to reduce the dimensionality of the problem is clearly unsuccessful in comparison to the new Fourier Summation Algorithm (FSA).

In both the FSA and baseline cases, the preferred choices of $t = 1$ and $k = 3$ are confirmed in the same way as in the non-Fourier case, although the evidence is not quite as clear-cut.
The double bar charts in Fig.~\ref{fig:dfttk} (a) and (b) display the evidence for choosing $t = 1$ and $k = 3$ respectively.
We split the bars to show the reconstruction error due to using a fixed number of (Fourier) coefficients, and perturbation error, which comes from the randomness in the protocol.
It is clear to see that only the perturbation error is affected when $t$ or $k$ changes.

To check the dependencies $\varepsilon^{-4/3}$ and $n^{-5/3}$, the perturbation error must be separated, as Theorem~\ref{thm:dftmse} shows.
The perturbation error grows as we take more (Fourier) coefficients.
However, as we see in more detail below, this is outweighed by the reduction from reconstruction error, which pushes us towards picking a larger number of coefficients to minimize the total MSE.

In Fig.~\ref{fig:dfteps}, best fit curves proportional to $\varepsilon^{-4/3}$ have been plotted.
These curves fit the data quite well, as they pass through all but one of the error bars.
In a similar way, curves proportional to $n^{-5/3}$ confirm this remaining dependency in Fig.~\ref{fig:dftn}.
In this experiment, the reduction in error as $n$ increases is not as dramatic as the non-Fourier case.
However, increasing $n$ by a factor of 10 still reduces the error by more than an order of magnitude.

We now look more closely at the effect of changing the number of (Fourier) coefficients $m$ on the magnitude of the perturbation error, for our ranges of $\varepsilon$ and $n$ in Fig.~\ref{fig:dfteps}.
We first compare the $\varepsilon$ dependencies when $5\%$ of (Fourier) coefficients have been taken, with $95\%$ of (Fourier) coefficients.
It is clear from the (a), (b) and (c), (d) pairs in Fig.~\ref{fig:dfteps} that taking a very small number of (Fourier) coefficients results in a drastically smaller perturbation error, by at least two orders of magnitude.
We can see that the perturbation error for the FSA is consistently lower than for the baseline case when $m = 95$, however the opposite is true for $m = 5$.
As we will see later, the total experimental $\widehat{\textnormal{MSE}}$ for the FSA is always much smaller than the baseline case.
This is because, for small values of $m$, the huge reconstruction error in the baseline case outweighs any small changes in the already minuscule perturbation error.

A similar story can be seen in Fig.~\ref{fig:dftn}, where we explore four additional intermediate choices of (Fourier) coefficients, ranging from $20\%$ to $75\%$.
Increasing the (Fourier) coefficients fourfold from $5\%$ to $20\%$ increases the perturbation error by at least an order of magnitude, but the same is true for the lesser increases from $20\%$ to $55\%$, and from $55\%$ to $95\%$.
This shows that as the number of (Fourier) coefficients increases, the sensitivity of the perturbation error increases.
Therefore, it is important to choose a low number of Fourier coefficients to reduce perturbation error, but it does not have to be lower than $m = 20$, for example, as there is also a trade-off with reconstruction error.

We now include the reconstruction error once again to investigate the effect of changing the number of (Fourier) coefficients $m$ on the ratio between the perturbation and reconstruction errors.
To illustrate this pattern more clearly, we plot a graph using a randomly generated synthetic dataset with a sinusoidal dependence on each coordinate, as well as the ECG Heartbeat Categorization Dataset used in all the other experiments.
We also isolate the perturbation error in a separate graph for each dataset, for ease of comparison between the FSA and baseline cases.
All of these graphs are displayed together in Fig.~\ref{fig:heartsynth}.

As mentioned earlier in this section, we can see that for the ECG Heartbeat Categorization Dataset, the reconstruction error outweighs the perturbation error, preventing the pattern for the perturbation error to be seen clearly.
However, in the case of the synthetic dataset, the reconstruction error is much smaller, allowing the exponential increase of the perturbation error to have an impact on the total experimental $\widehat{\textnormal{MSE}}$.
We can see that when using the synthetic dataset, retaining approximately $80\%$ of the (Fourier) coefficients optimizes the total experimental $\widehat{\textnormal{MSE}}$, and this occurs soon after the perturbation error outweighs the reconstruction error.

Note that in all of the graphs in Fig.~\ref{fig:heartsynth}, the perturbation and reconstruction errors when the FSA is implemented are at least an order of magnitude smaller than the same errors in the baseline case.
The only exception is the perturbation error when the number of (Fourier) coefficients is low, but in that case the difference is not significant, especially compared to the magnitude of the corresponding reconstruction error.

In conclusion, these experiments confirm that picking $t = 1$ and $k = 3$ serves to minimize the error.
The lines of best fit confirm the dependencies on the other parameters from Sections~\ref{sec:vectorsum} and \ref{sec:transform} for $m$, $d$, $\varepsilon$ and $n$, by implementing and applying Algorithms~\ref{alg:fixedpoint}, \ref{alg:analyzer} and \ref{alg:thealgo} to an ECG Heartbeat Categorization Dataset in Python.
The experiments demonstrate that the MSE observed in practice is sufficiently small to allow effective reconstruction of average vectors for a suitably large cohort of users.

By comparing the implementation of our new Fourier Summation Algorithm (FSA) with a suitable baseline, we have demonstrated that our usage of the Discrete Fourier Transform (DFT) reduces all experimental errors significantly, regardless of the settings of all other parameters.

\section{Conclusion} \label{sec:conc}

Our results extend a result from Balle \emph{et al.}~\cite{balleprivacyblanket} for scalar sums to provide a new protocol $\mathcal{P}_{d, k, n, t}$ in the Single-Message Shuffle Model for the private summation of vector-valued messages $(\vec{x}_{1}, \dots, \vec{x}_{n}) \in ([0,1]^{d})^{n}$.
It is not surprising that the normalized MSE of the resulting estimator has a dependence on $n^{-5/3}$, as this was the case for scalars, but the addition of a new dimension $d$ introduces a new dependency for the bound, as well as the possibility of sampling $t$ coordinates from each $d$-dimensional vector.
For this extension, we formally defined the \emph{vector view} as the knowledge of the analyzer upon receiving the randomized vectors, and expressed it as a union of overlapping scalar views.
Through the use of advanced composition results from Dwork \emph{et al.}~\cite{dwork}, we showed that the estimator now has normalized MSE $O_{\varepsilon, \delta} (d^{8/3} t n^{-5/3})$ which can be further improved to $O_{\varepsilon, \delta} (d^{8/3} n^{-5/3})$ by setting $t = 1$.

To further improve this bound, we adapted the method of Rastogi \emph{et al.}~\cite{rastogi} to implement a \emph{Discrete Fourier Transform} (DFT).
The purpose of this method was to compress each of the $d$-dimensional vectors $\vec{x}_{i}$ to a highly representative $m$-dimensional vector, where $m \ll d$, and then apply $\mathcal{P}_{d, k, n, t}$ to $m$ coefficients instead of $d$.
Although some accuracy is lost by transforming the vectors between the original and Fourier domains, this is counteracted by the improvement in the normalized MSE from a dependence on $d^{8/3}$ to $m^{8/3}$.

Our contributions have provided a stepping stone between the summation of the scalar case discussed by Balle \emph{et al.}~\cite{balleprivacyblanket} and the linearization of more sophisticated structures such as matrices and higher-dimensional tensors, both of which are reliant on the functionality of the vector case.
We have seen via both theory (Section~\ref{sec:transform}) and experiments (Section~\ref{sec:eeval}) that combining our new private summation protocol with a DFT reduces the MSE significantly.

The work we have presented here may be elaborated in further work.
For example, a useful property of the Fourier space is that a convolution in normal space is equivalent to simple multiplication in Fourier space.
Although this property is typically used to improve speed, it could be explored as to whether this might be leveraged to gain additional privacy.
Further, as mentioned in Section~\ref{sec:litreview}, there is potential for further exploration in the Multi-Message Shuffle Model to gain additional privacy, by utilizing methods presented by Balle \emph{et al.}~\cite{ballemulti}.

\appendix
\section*{Proof of Lemma~\ref{lem:scalar}} \label{app:proof}
\primelemma*

\begin{proof}
The way in which we split the vector view (i.e., to consider a single uniformly sampled coordinate of each vector-valued message in turn), means that we can apply a proof that is analogous to the scalar-valued case~\cite{balleprivacyblanket}.
We work through the key steps needed.

Recall from Section~\ref{sec:basic} that the case where the $n^{\text{th}}$ user submits a uniformly random message independent of their input satisfies DP trivially.
Otherwise, the $n^{\text{th}}$ user submits their true message, and we assume that analyzer removes from $\vec{Y}^{(\alpha_{ij})}$ any truthful messages associated with the first $n - 1$ users.
Denote $n_{l}^{(\alpha_{ij})}$ to be the count of $j^{\text{th}}$ coordinates remaining with a particular value $l \in [k]$.
If $\vec{x}_{n}^{(\alpha_{ij})} = \theta$ and $\vec{x}_{n}^{\prime (\alpha_{ij})} = \phi$, we obtain the relationship
\[
\frac{\Pr [ \text{View}_{\mathcal{M}}^{(\alpha_{ij})}(\vec{D}) = V_{\alpha_{ij}} ] }{\Pr [ \text{View}_{\mathcal{M}}^{(\alpha_{ij})}(\vec{D}') 
= V_{\alpha_{ij}} ] } = \frac{n_{\theta}^{(\alpha_{ij})}}{n_{\phi}^{(\alpha_{ij})}}.
\]

\noindent We observe that the counts $n_{\theta}^{(\alpha_{ij})}$ and $n_{\phi}^{(\alpha_{ij})}$ follow the binomial distributions $\mathsf{N}_\theta \sim {\tt Bin} \dftbig( s, \frac{\gamma}{k} \dftbig) + 1$ and $\mathsf{N}_\phi \sim {\tt Bin} \dftbig( s, \frac{\gamma}{k} \dftbig) $ respectively, where $s$ denotes the number of times that the coordinate $j$ is sampled.
In expectation, $s = (n-1)t/d$, and below we will show that it is close to its expectation:

\begin{align*}
&\Pr_{\mathsf{V}_{\alpha_{ij}} \sim \text{View}_{\mathcal{M}}^{(\alpha_{ij})}(\vec{D})} \!\left[ \frac{ \Pr [ \text{View}_{\mathcal{M}}^{(\alpha_{ij})}(\vec{D}) = \mathsf{V}_{\alpha_{ij}} ] }
{ \Pr [ \text{View}_{\mathcal{M}}^{(\alpha_{ij})}(\vec{D}') = \mathsf{V}_{\alpha_{ij}} ] } \geq e^{\varepsilon'} \right]\\
&= \Pr \!\left[ \frac{\mathsf{N}_{\theta}}{\mathsf{N}_{\phi}} \geq e^{\varepsilon'} \right].
\end{align*}

\noindent We define $c:= \E[\mathsf{N}_{\phi}] = \frac{\gamma}{k} \cdot s$ and split this into the union of two events, $N_\theta \geq c e^{\varepsilon'/2}$ and $N_\phi \leq c e^{-\varepsilon'/2}$.
Applying a Chernoff bound gives:
\begin{align*}
\Pr \!\left[ \frac{\mathsf{N}_{\theta}}{\mathsf{N}_{\phi}} \geq e^{\varepsilon'} \right] &\le \exp \!\left( - \frac{c}{3} \!\left( e^{\varepsilon'/2} - 1 - \frac{1}{c} \right)^{2} \right)\\
&+ \exp \!\left( - \frac{c}{2} \!\left( 1 - e^{- \varepsilon'/2} \right)^{2} \right).
\end{align*}

\noindent We will choose $c \geq \frac{14}{\varepsilon'^2} \log(2t/\delta)$ so that we have:
\[
\exp \hspace{.5mm} (\varepsilon'/2) - 1 - \frac{1}{c}
\ge \frac{\varepsilon'}{2} + \frac{\varepsilon'^2}{8} - \frac{\varepsilon'^2}{14\log(2t/\delta)}
\ge \frac{\varepsilon'}{2}.
\]

\noindent Using $\varepsilon' < 1$, we have:
\[
(1 - \exp \hspace{.5mm} (-\varepsilon'/2)) \ge (1 - \exp \hspace{.5mm} (-1/2))\varepsilon' \ge \frac{\varepsilon'}{\sqrt{7}}.
\]

\noindent Thus we have:
\begin{align*}
\Pr \!\left[ \frac{\mathsf{N}_{\theta}}{\mathsf{N}_{\phi}}  \geq e^{\varepsilon'} \right]
&\le \exp \dftbig( -\frac{c}{3} (\varepsilon'/2)^2 \dftbig) +
\exp \dftbig( -\frac{c}{2}(\varepsilon'/\sqrt{7})^2 \dftbig)\\
&\le 2\exp\left(-\frac{14}{2\varepsilon'^2}\frac{\varepsilon'^2}{7} \log(2t/\delta)\right) \leq \delta/t.
\end{align*}

We now apply another Chernoff bound to show that $s \leq 2\E[s]$, which can be used to give a bound on $\gamma$.
The following calculation proves that $\Pr[s \geq 2\E(s)] \leq \exp(-\E(s)/3)$,
using $\E(s) = (n-1)t/d$:
\[
\Pr[s \geq 2\E(s)]
\leq  \exp \dftbig( -\frac{n-1}{3}t/d \dftbig) \leq  \exp \dftbig( -\frac{n}{3} \dftbig) < \delta/3t,
\]
for all reasonable values of $\delta$.

Substituting these bounds on $s$ and $c$ into $\gamma s/k = c$ along with
$\varepsilon'=\frac{\varepsilon}{2\sqrt{2t \log (1/\delta)}}$
gives:
\begin{align*}
\gamma \geq & \frac{112kt \log(1/\delta) \log(2t/\delta)}{s \varepsilon^{2}} 
\geq \frac{56dk \log(1/\delta)\log(2t/\delta)}{(n-1) \varepsilon^{2}}.
\end{align*}
\end{proof}

\end{document}